\newif\ifAnon
\newif\ifArxiv
\Arxivtrue

\ifArxiv
\documentclass{article}
\usepackage[backref=page]{hyperref}
\hypersetup{
    colorlinks=true,
    linkcolor=violet,     
    urlcolor=cyan,
    pdftitle={Degree Sequence Reconstruction from Subgraph Traces},
    }

\usepackage[a4paper, total={6in, 8in}]{geometry}
\else

\documentclass[11pt,nonacm, review=true, backref=page]{acmart}
\authorsaddresses{}
\fi
\usepackage{aricksMacros}

\usepackage{microtype}

\usepackage{amsfonts}
\usepackage{graphicx}
\usepackage{algorithm}
\usepackage{algorithmic}
\usepackage[most]{tcolorbox}

\usepackage{cleveref}

\newtcbtheorem[number within=section]{theorembox}{Theorem}{
    fonttitle=\bfseries,   % Title font styling
    breakable,             % Allows splitting across pages
    enhanced,              % Enables advanced styling configurations
}{thm}

\newtcbtheorem[number within=section]{propositionbox}{Proposition}{
    fonttitle=\bfseries,   % Title font styling
    breakable,             % Allows splitting across pages
    enhanced,              % Enables advanced styling configurations
}{prop}

\newtcbtheorem[number within=section]{lemmabox}{Lemma}{
    fonttitle=\bfseries,   % Title font styling
    breakable,             % Allows splitting across pages
    enhanced,              % Enables advanced styling configurations
}{lem}

\newtcbtheorem[number within=section]{corollarybox}{Corollary}{
    fonttitle=\bfseries,   % Title font styling
    breakable,             % Allows splitting across pages
    enhanced,              % Enables advanced styling configurations
}{cor}

\makeatletter
\crefname{tcb@cnt@lemmabox}{lemma}{lemmas}
\Crefname{tcb@cnt@lemmabox}{Lemma}{Lemmas}
\crefformat{tcb@cnt@lemmabox}{lemma~#2#1#3}
\Crefformat{tcb@cnt@lemmabox}{Lemma~#2#1#3}

\crefname{tcb@cnt@theorembox}{theorem}{theorems}
\Crefname{tcb@cnt@theorembox}{Theorem}{Theorems}
\crefformat{tcb@cnt@theorembox}{theorem~#2#1#3}
\Crefformat{tcb@cnt@theorembox}{Theorem~#2#1#3}

\crefname{tcb@cnt@corollarybox}{corollary}{corollaries}
\Crefname{tcb@cnt@corollarybox}{Corollary}{Corollaries}
\crefformat{tcb@cnt@corollarybox}{corollary~#2#1#3}
\Crefformat{tcb@cnt@corollarybox}{Corollary~#2#1#3}

\crefname{tcb@cnt@propositionbox}{proposition}{propositions}
\Crefname{tcb@cnt@propositionbox}{Proposition}{Propositions}
\crefformat{tcb@cnt@propositionbox}{proposition~#2#1#3}
\Crefformat{tcb@cnt@propositionbox}{Proposition~#2#1#3}
\makeatother

\title{Degree Sequence Reconstruction from Subgraph Traces}

\ifAnon
    
\else

\ifArxiv
\author{
Venkata Gandikota%
\thanks{Syracuse University, vsgandik@syr.edu}%
\and
Arick Grootveld%
\thanks{Syracuse University, aegrootv@syr.edu}%
\footnotemark[2]
\and
Haodong Yang%
\thanks{Syracuse University, hyang85@syr.edu}%
\footnotemark[2]
\thanks{All authors contributed equally to this work and share first authorship. Authors are listed in alphabetical order by surname.}
}

\date{}

\else
\author{Venkata Gandikota}
\affiliation{\institution{Syracuse University} \country{United States}}
\email{vsgandik@syr.edu}

\author{Arick Grootveld}
\affiliation{\institution{Syracuse University} \country{United States}}
\email{aegrootv@syr.edu}

\author{Haodong Yang}
\affiliation{\institution{Syracuse University} \country{United States}}
\email{hyang85@syr.edu}

\fi

\fi

\DeclareMathOperator{\diag}{diag}

\ifAnon
    \hypersetup{
        pdftitle={\small{Degree Sequence Reconstruction from Random Subgraphs}},
        pdfauthor={},
    }
\else
    \hypersetup{
        pdftitle={Degree Sequence Reconstruction from Random Subgraphs},
        pdfauthor={Venkata Gandikota, Arick Grootveld, Haodong Yang},
    }
\fi

\makeatletter
\let\@authorsaddresses\@empty
\makeatother

\begin{document}

\ifArxiv

\else
\begin{CCSXML}
    <ccs2012>
    <concept>
    <concept_id>10002950.10003648</concept_id>
    <concept_desc>Mathematics of computing~Probability and statistics</concept_desc>
    <concept_significance>300</concept_significance>
    </concept>
    <concept>
    <concept_id>10002950.10003712</concept_id>
    <concept_desc>Mathematics of computing~Information theory</concept_desc>
    <concept_significance>300</concept_significance>
    </concept>
    </ccs2012>
\end{CCSXML}

\ccsdesc[300]{Mathematics of computing~Probability and statistics}
\ccsdesc[300]{Mathematics of computing~Information theory}

\keywords{Graph Reconstruction, Trace Reconstruction, Sample Complexity}

\fi

\ifArxiv
\maketitle
\fi

\begin{abstract}
    The goal of degree sequence reconstruction is to recover the ordered vector of degrees of an unknown graph from vertex deleted traces, where each vertex is deleted independently with probability $p$. We provide two algorithms for reconstruction; the first uses rejection sampling to reduce the problem to an estimation problem for a mixture distribution. Combined with prior trace reconstruction results, this gives a reconstruction algorithm using $\Exp{\tilde O (n^{1/3})}$ traces, although no sub-exponential time decoder is known. Our other approach involves recovering certain graph invariants,  degree moments, that can identify a graphs degree sequence. Extremal polynomial bounds show that $\tilde \Theta(n^{1/2})$ degree moments are necessary and sufficient to reconstruct the degree sequence, which leads to an algorithm with $\Exp{\tilde O(n^{1/2})}$ trace complexity. The same polynomial machinery yields a sub-exponential time decoder for the degree sequence from the moments. Additionally, we give an $O(n^{3})$ upper bound and a $\Omega(n^2)$ lower bound for the trace complexity of recovering the number of edges. 
\end{abstract}

\ifArxiv
\else 
\maketitle
\fi

% \newpage

\ifArxiv
\tableofcontents
\fi

\section{Introduction}
\label{sec:introduction}

Graph reconstruction is a problem with a long history of interest in graph theory and combinatorics, due to Kelly  \cite{kelly1942isometric} and Ulam \cite{Ulam1960ACollection}. In the simplest version of the problem, we have a graph $G$, and we are permitted to look at the collection of $n$ cards, which are the subgraphs made by removing a single vertex in $G$. 
The goal is to reconstruct $G$ from its cards, and the famous reconstruction conjecture claims that all graphs are reconstructible from their deck. A recent result by Ivanov \cite{ivanov2026nonisomorphic} reports that there are families of graphs which are indistinguishable using $\alpha n$ cards, for any $\alpha \in (0,1)$, meaning that no fixed fraction of cards is sufficient to reconstruct an arbitrary graph. In contrast, Bollobás \cite{Bollobas1990AlmostEvery} showed that almost all  graphs can be reconstructed using only $3$ cards. For more information on graph reconstruction, see one of the survey papers by Harary \cite{Harary1974ASurvey}, Bondy and Hemminger \cite{Bondy1977GraphRecon}, or Asciak, Francalanza, Lauri, and Myrvold \cite{asciak2010survey}.

Another reconstruction problem of theoretical interest is \textit{trace reconstruction}, first studied by Kalashnik \cite{kalashnik1973reconstruction} and Levenshtein \cite{Lev97}. The goal in the trace reconstruction problem is to recover a length $n$ binary string from random sub-strings (traces) which are generated by deleting symbols independently at random with probability $p := 1-q$. Mean based algorithms are known to have trace complexity (and runtime) $\Exp{\tilde \Theta(n^{1/3})}$ due to De, O'Donnell and Servedio \cite{de2017optimal} as well as Nazarov and Peres \cite{Nazarov2017TraceRecon}. Cheng, Grigorescu, Li, Sudan and Zhu \cite{Cheng2025OnKMer} showed that $k$-mer based algorithms have trace complexity $\Exp{\tilde \Theta(n^{1/5})}$, and the maximum likelihood estimator has trace complexity within a factor of $n$ of any optimal method. For arbitrary algorithms a recent result by Burudgunte, Valiant and Wang \cite{burudgunte2026quasipolynomial} gives the best known upper bound of $\Exp{ O(\log^c(n)) }$ for a constant $c > 0$, while the best lower bound is due to Chase \cite{chase2020newlowerboundstrace} $\tilde \Omega\left(n^{3/2}\right)$. 
There is a large body of work on variations of trace reconstruction, including coded trace reconstruction \cite{Cheraghchi2020Coded,Brakensiek2020CodedConstant,Srinivasavaradhan2021Trellis, Hanna2022CodingFor,Kas2023OptimalCodes,Rathore2025ReedMuller},  approximate trace reconstruction \cite{chase2021approximate, Xi2023Approximate}, and circular trace reconstruction \cite{Narayanan2021Circular,burudgunte2025Circular}. 
For randomly generated binary strings Holden, Pemantle, and Peres \cite{Holden2018Subpolynomial} proved an upper bound of $\Exp{O(\log^{1/3}n)}$, which was subsequently improved to $\Exp{\tilde O(\log^{1/5} n)}$ by Rubinstein \cite{Rubinstein2023AverageCase}. 

McGregor and Sengupta \cite{mcgregor2022graph,McGregor2024GraphRecon} proposed a bridge between these two problems, called the \textit{graph trace reconstruction problem}. 
In this problem, a random subgraph trace is obtained by deleting vertices independently at random with probability $p$. The goal then is to reconstruct an  unknown graph $G$ from the fewest number of sampled traces. 
McGregor and Sengupta \cite{mcgregor2022graph} showed that the number of traces required to reconstruct a randomly generated graph is $\Theta(q^{-2}\log n )$, and in \cite{McGregor2024GraphRecon} they generalized this result to a setting where edges could be added or removed independently at random. Additionally, they found that for arbitrary graphs $\Exp{\Omega(n)}$ traces are necessary to reconstruct, which coincides with the trivial upper bound of $\Exp{O(n)}$ from waiting for a single trace with no vertex deletions. This matches the story of the graph-and-trace reconstruction problem: randomly generated objects can be reconstructed with a small amount of information, while reconstructing arbitrary objects is significantly more challenging. 
Notable variations of graph trace reconstruction include tree trace reconstruction which was introduced by Davies, Rácz, and Rashtchian \cite{davies2019reconstructing} and further developed by Maranzatto \cite{Maranzatto2020TreeTrace, maranzatto2022reconstructing, Maranzatto2024TreeTrace}, and Brailovskaya and Rácz \cite{Brailovskaya_Racz_2023}; as well as spider graph reconstruction by Sun and Yue \cite{Sun2023TraceReconSpiderGraphs}.

Between the two extremes are questions about the number of cards required to recover a graph's parameters. For example, Brown and Fenner \cite{Brown2018TheSize} showed that $n-2$ cards are sufficient to recover the number of edges in $G$, which was later improved by Groenland, Guggiari, and Scott \cite{Groenland2021Size} to $n - \frac{1}{20} \sqrt{n}$. Myrvold \cite{myrvold1992degree} showed that $n-1$ cards are sufficient to reconstruct the degree sequence of a graph. Groenland, Johnston, Kupavskii, Meeks, Scott and Tan showed that for a graph with average degree $d$, $n - O(n/d^3)$ cards suffice to reconstruct the degree sequence \cite{Groeland2022Reconstructing}. Bowler, Brown, Fenner and Myrvold \cite{bowler2011recognizing} showed that $\left\lfloor \frac{n}{2}\right \rfloor + 2$ cards can be used to determine the connectedness of a graph. In the graph trace setting, however, the example of McGregor and Sengupta \cite{mcgregor2022graph} demonstrates that the connectedness of a graph requires $\Exp{\Omega(n)}$ traces to determine. 

% \TODO{I might add my isolated vertex result to the warmup section, since it's explicitly mentioned in the literature.}

In this work, we study the problem of reconstructing the degree sequence of a graph from traces. This problem has been considered in the  statistical literature, first by  \cite{Frank1980Estimation}, later \cite{zhang2015estimating} studied it from the perspective of social network monitoring, and by
\cite{garrard2017goodness} for protein interaction networks. These works focused on the statistical properties of estimators, and gave results for approximation and testing problems. In contrast, we address the sample complexity of exact degree sequence reconstruction. 
% Our upper bound for the number of moments required to reconstruct our degree sequence are similar in nature to the work of Scott \cite{scott1997reconstructing} who explored multiset reconstruction from subsequences, and Groeland, Johnston, Scott and Tan \cite{Groenland2026SmallerCards} who considered reconstructing trees from their $(n-r)$ sized cards. 

\subsection{Our Results}
As a warm-up, we show that $\tilde O (n^3)$ traces suffice to recover the number of edges (Theorem~\ref{thm:EdgeCountRec_Achievability}), and that $\tilde \Omega(n^2)$ traces are necessary (Theorem~\ref{thm:LowBound_EdgeCount}). 
For degree sequence reconstruction, we provide two methods. The first (Theorem~\ref{thm:binomMixtureTraceParamRecon}) recovers the degree sequence with $m = \Exp{\tilde O(n^{1/3})}$ traces by using rejection sampling to convert the problem to estimating the parameters of a mixture of binomials, and then using the result of Krishnamurthy, Mazumdar, McGregor and Pal\cite{krishnamurthy2019trace, Krishnamurthy2020Algebraic} from trace reconstruction. While the result gives an information theoretic separation between graphs with different degree sequences, there is no known sub-exponential runtime algorithm for reconstruction. 
The second method (Theorem~\ref{thm:RecDegSeq_FromTraces}) uses a moment based approach to recover the degree sequence from $m = \Exp{\tilde O(\sqrt{n})}$ traces, and we provide an algorithm with matching runtime. The moment based approach involves estimating the sample complexity of recovering certain invariants of the graph, and then demonstrating how many of these invariants are sufficient to identify the degree sequence of the graph uniquely. 
% In Theorem~\ref{}, we also show the optimality (up to $\log$ factors) of the trace complexity in Theorem~\ref{thm:RecDegSeq_FromTraces} for the degree-moment based methods.

% Table~\ref{tab:resultsTable} compares the existing rejection-sampling approach with our degree-moment method, showing that our approach achieves a sub-exponential runtime of \(\exp\{\tilde O(n^{1/2})\}\) while using \(\exp\{\tilde O(n^{1/2})\}\) samples.

%An outline of our contributions can be found in \Cref{tab:resultsTable}. 

\paragraph{Graph Size Reconstruction} For the upper bound on the trace complexity, we show that the maximum variance of the number of edges in the graph trace is bounded by $O(n^3)$, which is achieved by the complete graph. Applying Bernstein's inequality gives us a sample complexity of $O\left(\frac{n^3}{q^2}\log(1/\delta)\right)$. 
For the lower bound, we consider complete bipartite graphs on $\left(\frac{n}{2}, \frac{n}{2}\right)$ and $\left(\frac{n}{2}-1, \frac{n}{2}+1\right)$ vertices. We show that the Hellinger distance after applying the vertex deletion channel is $O\left(n^{-2}\right)$, which leads to a sample complexity of $\Omega(n^2)$ to distinguish between the two graphs with high probability.

% \noindent \textit{Degree Sequence Reconstruction}: Suppose you sample a vertex uniformly at random from each trace and take its degree. Conditioned on the vertex surviving the channel, the resulting degree is a binomial random variable, so that the the sampled degree follows a binomial mixture distribution. With this in mind, our first method uses rejection sampling to convert traces to binomial mixture samples, with mixing weights that describe the degree sequence. Applying \cite[Theorem 8]{krishnamurthy2019trace} shows that the mixture parameters can be estimated using $\Exp{\tilde O(n^{1/3})}$ samples, leading to \Cref{thm:binomMixtureTraceParamRecon}. To the best of our knowledge, the only method for recovering the mixture parameters is the trivial algorithm of searching over all mixture parameters, which has an $\Exp{O(n)}$ runtime.

% Our other method involves reconstructing the degree sequence using graph invariants that we call `degree moments', which are linear functionals 

\paragraph{Degree Sequence Reconstruction} Using rejection sampling, we reduce the problem of reconstructing the degree sequence of a graph to a parameter estimation problem for mixtures of binomial distributions. Applying \cite[Theorem 8]{krishnamurthy2019trace} shows that this parameter estimation problem can be solved using $\Exp{\tilde O(n^{1/3})}$ samples, leading to \Cref{thm:binomMixtureTraceParamRecon}. To the best of our knowledge, the only method for recovering the mixture parameters is the trivial algorithm of searching over all mixture parameters, which has an $\Exp{O(n)}$ runtime. 

We provide another algorithm to reconstruct the degree sequence using a graph invariant that we call `degree moments'. In \Cref{thm:TraceComplexity_EdgeMomentEstimation} we show that $\Exp{\tilde O (k)}$ suffice to estimate the first $k$ degree moments of $G$ with high probability. Using the polynomial bound from \cite[Theorem 2.2]{erdelyi2016coppersmith}, we prove \Cref{thm:upperBound_kn}, which shows that $k = O\left(\sqrt{n \log n}\right)$ degree moments are sufficient to recover the degree sequence of any graph. Using a pigeonhole argument, along with the Erdös-Gallai theorem \cite{erdos1960grafok,Choudum_1986}, we construct graphs with different degree sequences and matching first $k = \Omega\left(\sqrt{\frac{n}{\log n}}\right)$ degree moments, which gives us \Cref{thm:lowerBound_kn}. In combination, this shows that $k = \tilde \Theta(\sqrt{n})$ degree moments are necessary and sufficient to recover the degree sequence. Finally, we develop an algorithm to peel off elements of the degree sequence using degree moments by constructing a suitable polynomial, leading to \Cref{thm:RecDegSeq_FromMoments}. 
% \begin{table}[h]
% \begin{tabular}{|c|c|c|}
% \hline
%  & \textbf{Sample Complexity} & \textbf{Runtime} \\ \hline
% % \textit{\begin{tabular}[c]{@{}c@{}}Graph Size:\\ Upper Bound\end{tabular}}  & $O(n^3)$                   & $\poly(n)$         \\ \hline
% % \textit{\begin{tabular}[c]{@{}c@{}}Graph Size: \\ Lower Bound\end{tabular}} & $\Omega(n^2)$              & ---              \\ \hline
% \textit{\begin{tabular}[c]{@{}c@{}}Degree Sequence:\\ Rejection Sampling\end{tabular}} &
%   $\exp\left\{ \tilde O\left(n^{1/3}\right) \right\}$ &
%   $\exp\{O(n)\}$ \\ \hline
% \textit{\begin{tabular}[c]{@{}c@{}}Degree Sequence:\\ Degree Moments\end{tabular}} &
%   $\exp\left\{ \tilde O\left(n^{1/2}\right) \right\}$ &
%   $\exp\left\{ \tilde O\left(n^{1/2}\right) \right\}$ \\ \hline
% \end{tabular}
% \caption{Table of sample complexities and runtime results for degree sequence reconstruction (fixed $q$)}
% \label{tab:resultsTable}
% \end{table}
~\\

The rest of the paper is organized in the following manner: \Cref{sec:problem-setup} describes the vertex deletion channel, the degree sequence reconstruction problem, and the degree moments of a graph. 
\Cref{sec:EdgeCountRecon} gives an upper and lower bound on the number of traces required to recover the number of edges in a graph. \Cref{sec:DegSeqRecon_BinomMixtures} describes a method for reconstructing the degree sequence from the binomial mixture result \cite[Theorem 8]{krishnamurthy2019trace}. \Cref{sec:DegSeqRecon_EdgeMoments} gives an algorithm for reconstructing the degree sequence using degree moments. In \Cref{sec:conclusion} we discuss the results and what problems remain.  

%%%%%%%%%%%%%%%%%%%%%%%%%%%%%%%%%%%%%%%%%%%%%%%%
\section{Preliminaries}
\label{sec:problem-setup}
All graphs in this work are assumed to be unlabeled, finite, and simple.  
We consider the number of vertices, $n$, to be known\footnote{For $n$ unknown, we can recover the parameter exactly with high probability using standard techniques in statistics; For example, see the recent work by Georgieva and Vidakovic \cite{Georgieva2025Revisiting}, and references therein. To recover $n$ with probability $> 1- \delta$, $m = O\left(n \log(1/\delta)\right)$ traces would suffice. Since all problems addressed in this work require $\Omega(n^2)$ traces, we could assume $n$ to be unknown without loss of generality.}, and the vertex retention probability, $q$ to be fixed and known. For $a \in \R$, $\lceil a \rfloor$ denotes rounding to the nearest integer. We use $\log$ to denote the natural logarithm, while $\log_2$ denotes the logarithm base 2.
Let $\R_k[x]$ denote the set of polynomials with real coefficients and degree $\leq k$, and $\mb C_k[x]$ the set of polynomials with complex coefficients and degree $\leq k$. 

Take $x^{\underline{a}}$ to be the falling factorial power of $x$, so that 
\begin{equation*}
    x^{\underline{a}} := x(x-1) \cdots (x-a+1).
\end{equation*}
The first $k$ falling factorial moments of $x$ form a basis for $\R_k[x]$, and the Stirling numbers of the first kind, $s(a,r)$, are the coefficients to convert between the standard and falling factorial polynomial basis. By this we mean, 
\begin{equation}
    \label{eq:StirlingNumbersFirstKind}
    x^{\underline{a}} = \sum_{r=0}^{a} s_{a,r} x^r.
\end{equation}
Similarly, the Stirling numbers of the second kind, $\bfrac{a}{r}$, are the coefficients to convert from the falling factorial to the standard basis, so that 
\begin{equation}
    \label{eq:StirlingNumbersSecondKind}
    x^a = \sum_{r=0}^a \bfrac{a}{r} x^{\underline{r}}.
\end{equation}
We use the convention $\bfrac{0}{0} = 1$ and $\bfrac{r}{j} = 0$ whenever $j < 0$, $j > r$ or $j=0 < r$.

Let $\mc G_n$ denote the set of unlabeled simple graphs on $n$
vertices, and let $\mc G_n^{\leq}$ denote the set of unlabeled simple
graphs on at most $n$ vertices. For a graph $G$, let $V(G)$ and
$E(G)$ denote its vertex and edge sets, respectively. Fix a vertex-deletion probability $p\in(0,1)$, and let
\[
    q:=1-p
\]
denote the vertex-retention probability. For a graph
$G\in\mc G_n$, let
\[
    \Phi_p(G)\in\mc G_n^{\leq}
\]
be the random induced subgraph obtained by retaining every vertex
independently with probability $q$.
The graph $\Phi_p(G)$ is the induced subgraph of $G$ on the retained vertex set 
\[
    \left\{v \in V(G): Z_v = 1\right\},
\]
with $Z_v \sim \rm{Bernoulli}(q)$ the random variable describing whether $v$ is retained after applying the channel.   

For any graph $H\in\mc G_n^{\leq}$, define its degree-count vector by
\[
    d_H
    :=
    \left(
        d_{H,0},d_{H,1},\ldots,d_{H,n-1}
    \right)
    \in\mb Z_{\geq0}^{n},
\]
where $d_{H,i}$ is the number of vertices of degree $i$ in $H$.
Thus,
\begin{equation}
    \sum_{k=0}^{n-1}d_{H,k} = |V(H)|
    \quad\text{and}\quad
    \sum_{k=0}^{n-1} k\cdot d_{H, k} = 2\abs{E(H)},
\end{equation}
We refer to $d_H$ as the degree sequence of $H$, although it is more
precisely the degree-count representation of the usual ordered degree
sequence.

\noindent Let
\begin{equation*}
    \mc D_n := \left\{ d_G: G\in\mc G_n \right\}
\end{equation*}
denote the set of graphical degree-count vectors on exactly $n$
vertices. Similarly, define
\begin{equation*}
    \mc D_n^{\leq} := \left\{d_H: H\in\mc G_n^{\leq}\right\}.
\end{equation*}
We use the relaxed set
\begin{equation}
    \label{eq:relaxedDegSeqSet}
    \mc H_n := \left\{d\in\mb Z_{\geq0}^{n}:\sum_{i=0}^{n-1}d_i=n\right\}.
\end{equation}
Every graphical degree-count vector belongs to $\mc H_n$, so
\begin{equation*}
    \mc D_n\subseteq\mc H_n,
\end{equation*}
and for $n \geq 3$ this inclusion is strict. 

\subsection{The Degree-Sequence Deletion Channel}
\label{subsec:degree-sequence-channel}

Although vertex deletion naturally produces the random induced subgraph $\Phi_p(G)$, the parameters we estimate are invariant to the graph's topology. 
The algorithms proposed in this work only utilize the degree-count vector of each subgraph trace. Therefore, we consider a slightly weaker observation model that can be derived from subgraph traces. We will assume access to \textit{degree-sequence traces} that are defined as the degree sequence of a subgraph trace. 

Define the degree-sequence deletion channel by
\begin{equation*}
    \phi_p(G) := d_{\Phi_p(G)}.
\end{equation*}
Thus,
\begin{equation*}
    \phi_p: \mc G_n \longrightarrow \mc D_n^{\leq}
\end{equation*}
is a channel whose output is a degree-count vector. 

We call one realization, $t \sim \phi_p(G)$, a degree-sequence trace which can be directly computed from the subgraph trace $\Phi_p(G)$. 
Given $m$ independent channel outputs, we define our observation model to be
\begin{equation*}
    t^{(1)},\ldots,t^{(m)} \overset{\mathrm{i.i.d.}}{\sim} \phi_p(G),
\end{equation*}
so that an algorithm receives the vectors $t^{(1)}, \dots, t^{(m)}$, rather than the full topological information of each subgraph. 
We note that all results in this work also hold for the problem of estimating the degree sequence from $m$ observations of $\Phi_p(G)$. 

Our goal is to reconstruct the original degree-count vector $d_G$ from these traces. An estimator for this problem is therefore a map 
\begin{equation*}
    \widehat d_m:\left(\mc D_n^{\leq}\right)^m\longrightarrow\mc D_n.
\end{equation*}

% For $\delta\in(0,\frac12)$, define the degree-sequence trace
% complexity by
% \[
%     m_{\mathrm{deg}}(n,p,\delta)
%     :=
%     \min_{\hat d_m}
%     \left\{
%         m: \sup_{G\in\mc G_n}
%         \Prob_G
%         \left[
%             \widehat d_m(t_1,\ldots,t_m)
%             \neq
%             d_G
%         \right]
%         \leq\delta
%     \right\}.
% \]

The complete distribution of $\phi_p(G)$ may depend on the topology of $G$ beyond its degree-count vector $d_G$. Consequently, it would not generally be correct to write the channel as $\phi_p(d_G)$.
Its expectation, however, depends only on $d_G$.

Define
\begin{equation*}
    v_G := \E[\phi_p(G)].
\end{equation*}
For every $i\in\{0,\ldots,n-1\}$, we have 
\begin{equation}
    \label{eq:expected-trace-degree-count}
    (v_G)_i = q^{i+1} \sum_{\ell=i}^{n-1} \binom{\ell}{i} p^{\ell-i}d_{G,\ell}.
\end{equation}

Indeed, consider an original vertex of degree $\ell$. For this vertex
to have degree $i$ after deletion, the vertex itself must be retained,
exactly $i$ of its $\ell$ neighbors must be retained, and its remaining
$\ell-i$ neighbors must be deleted. The probability of this event is
\[
    q
    \binom{\ell}{i}
    q^i
    p^{\ell-i}
    =
    q^{i+1}
    \binom{\ell}{i}
    p^{\ell-i}.
\]
Summing over the $d_{G,\ell}$ original degree-$\ell$ vertices and then
over all $\ell\geq i$ yields
\eqref{eq:expected-trace-degree-count}.

% Equation~\eqref{eq:expected-trace-degree-count} is triangular in the
% coordinates of $d_G$. The coefficient of $d_{G,i}$ in $(v_G)_i$ is
% $q^{i+1}>0$. Therefore, for every $q>0$, the exact mean vector $v_G$
% uniquely determines $d_G$. The statistical difficulty is that $v_G$
% is unknown and must be estimated from finitely many independent
% degree-sequence traces.

\subsection{Degree Moments}
\label{subsec:degree-moments}

For $u=(u_0,\ldots,u_{n-1})\in\mb R^n$, define
\[
    f_r(u)
    :=
    \sum_{i=0}^{n-1}
    i^r u_i,
    \qquad
    r\geq0,
\]
where $i^0:=1$. For the original degree-count vector,
\[
    f_0(d_G)=n
    \qquad\text{and}\qquad
    f_1(d_G)=2|E(G)|.
\]
We refer to $f_r(d_G)$ as the $r$th moment of the degree sequence.

Since $f_r$ is linear,
\begin{equation}
    \label{eq:firstTwoMomentEqs}
    f_r(v_G) = f_r\left(\E[\phi_p(G)]\right) = \E\left[f_r(\phi_p(G))\right].
\end{equation}
Thus, the moments of $v_G$ can be estimated using only the observed
degree-count vectors $t_1,\ldots,t_m$.

Our reconstruction method estimates a collection of low-order moments
of $d_G$ from the degree-sequence traces and then uses these moments
to identify the complete degree-count vector. The two main questions
are therefore how many moments are sufficient to determine $d_G$ and
how many degree-sequence traces are required to recover those moments
exactly.

Let $k_n^{\rm{Graph}}$ and $k_n^{\mc H}$ denote the minimum number of moments required to distinguish any degree sequence in $\mc D_n$ and $\mc H$ respectively, so that  
\begin{align}
    k_n^{\rm{Graph}} &:= \min\{k : \mc M_k(d) \text{ is injective on } \mc D_n\} \label{align:k_nG_Def} \\
    k_n^{\mc H} &:= \min\{k : \mc M_k(d) \text{ is injective on } \mc H_n\},
\end{align}
where $\mc M_k(d) := \left(f_0(d), \dots, f_{k-1}(d)\right)$ is the moment vector of $d$.

%%%%%%%%%%%%%%%%%%%%%%%%%%%%%%%
\section{Edge Count Reconstruction}
\label{sec:EdgeCountRecon}

As a warm-up to the full problem, we consider the problem of recovering the size (number of edges) of a graph from traces. We show that $O(n^3)$ traces are sufficient, and $\Omega(n^2)$ traces are required for any algorithm to recover the size of $G$ with high probability.  

Let
\[
    t^{(1)},\dots,t^{(m)} \overset{\mathrm{i.i.d.}}{\sim} \phi_p(G)
\]
be the observed degree-sequence traces. For each $j=1,\dots,m$,
define
\begin{equation}
    \label{eq:observed-trace-edge-count}
    X_j := \frac{1}{2} \sum_{i=0}^{n-1} i\,t^{(j)}_{i} = \frac{1}{2} f_1(t^{(j)}),
\end{equation}
to be the number of edges in trace $t_j$.  

Every original edge survives the vertex-deletion process precisely when both of its endpoints are retained. Since vertices are retained independently with probability $q$, each edge survives with probability $q^2$. Consequently,
\begin{equation}
    \label{eq:expected-trace-edge-count}
    \E[X_j] = q^2|E(G)|.
\end{equation}
Define the empirical mean $\overline X_m := \frac{1}{m} \sum_{j=1}^{m} X_j$ and the estimator
\begin{equation}
    \label{eq:edge-count-estimator}
    \widehat E_m := \left\lceil \frac{\overline X_m}{q^2}\right\rfloor.
\end{equation}
Since $|E(G)|$ is an integer, the event
\begin{equation}
    \label{eq:edgeCount_SuffCond}
    \left|\overline X_m-q^2|E(G)|\right|<\frac{q^2}{2}
\end{equation}
implies $\widehat E_m=|E(G)|$.
Then achievability follows from concentration around the mean. 

Since $0 \leq X_j \leq \binom{n}{2}$, one could apply Hoeffding's inequality, and immediately get 
\begin{equation}
    \label{eq:edge-count-hoeffding}
    \Prob\left[\left|\overline X_m-q^2|E(G)|\right|\geq\frac{q^2}{2}\right]\leq 2\exp\left\{-\frac{m q^4}{2\binom{n}{2}^{2}}\right\}.
\end{equation}
This leads to a sample complexity of $m = O\left(\frac{n^4}{q^4}\log\left(\frac1\delta\right)\right)$ to recover the size with probability $> 1 - \delta$. The dependence on $n$ can be improved by considering the variance of the sizes. 

\begin{theorembox}{Size Reconstruction Upper Bound}{EdgeCountRec_Achievability}
    % \label{thm:EdgeCountRec_Achievability}
    Using $m = O\left(\frac{n^3}{q^2} \log(1/\delta)\right)$ degree-sequences traces are sufficient to recover the size of $G$ with probability $> 1- \delta$. 
\end{theorembox}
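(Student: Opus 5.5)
We bound $\mathrm{Var}(X_j)$ by $O(q^2 n^3)$ and then apply Bernstein's inequality to the empirical mean $\overline X_m$; the sufficient condition \eqref{eq:edgeCount_SuffCond} then gives the claim.

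For the variance, write $X_j=\sum_{e\in E(G)} Y_e$, where $Y_e=Z_uZ_v$ for $e=\{u,v\}$ is the indicator that edge $e$ survives. Then
\begin{equation*}
    \mathrm{Var}(X_j)=\sum_{e,e'\in E(G)}\mathrm{Cov}(Y_e,Y_{e'}).
\end{equation*}
If $e$ and $e'$ are disjoint, the covariance is $0$ by independence of the $Z_v$. If $e=e'$, it is $q^2-q^4$. If $e$ and $e'$ share exactly one vertex, then $Y_eY_{e'}$ is a product of three independent Bernoulli$(q)$ variables, so the covariance is $q^3-q^4$. Hence
\begin{equation*}
    \mathrm{Var}(X_j)=(q^2-q^4)\,|E(G)| + (q^3-q^4)\sum_{v}\deg(v)\big(\deg(v)-1\big),
\end{equation*}
since the number of ordered pairs of distinct edges sharing a vertex is $\sum_v \deg(v)(\deg(v)-1)$. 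This step is the only real computation, and it is also where the $q$-dependence is fixed. Using $|E(G)|\le n^2/2$ and $\sum_v\deg(v)(\deg(v)-1)\le n(n-1)(n-2)$, with equality for $K_n$, we get $\sigma^2:=\mathrm{Var}(X_j)\le q^2 n^2 + q^3 n^3 = O(q^2 n^3)$.

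Next apply Bernstein's inequality to the centered variables $X_j-q^2|E(G)|$, which satisfy $|X_j-q^2|E(G)||\le \binom n2=:M$. With $t=q^2/2$,
\begin{equation*}
    \Prob\left[\left|\overline X_m-q^2|E(G)|\right|\ge \tfrac{q^2}{2}\right]\le 2\exp\left\{-\frac{m t^2/2}{\sigma^2+Mt/3}\right\}.
\end{equation*}
The denominator is $O(q^2n^3)+O(q^2n^2)=O(q^2n^3)$, because $Mt/3=O(n^2q^2)$ is dominated by the variance term. The exponent is therefore at least $c\,m q^4/(q^2 n^3)=c\,mq^2/n^3$ for an absolute constant $c>0$. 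Taking $m=C\frac{n^3}{q^2}\log(2/\delta)$ for a suitable constant $C$ makes the failure probability at most $\delta$. On the complementary event, \eqref{eq:edgeCount_SuffCond} holds, so $\widehat E_m=|E(G)|$.

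The main point to handle carefully is that the range term $Mt/3$ does not dominate. It does not, since $t=q^2/2$ is small, so that term contributes only $O(q^2n^2)$. The variance is the dominant term, and it carries a factor of $q^2$ rather than $q^4$, which is what yields the stated $n^3/q^2$ dependence.
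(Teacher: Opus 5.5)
Your proposal is correct and follows essentially the same route as the paper. Both write $X_j$ as a sum of edge-survival indicators and compute the variance exactly from the $q^3-q^4$ covariance of edges sharing a vertex; your $\sum_v \deg(v)(\deg(v)-1)$ term is the paper's $2\sum_v\binom{\deg_G(v)}{2}$. Both then bound the variance by $O(q^2n^3)$, with $K_n$ extremal, and apply Bernstein's inequality at deviation $q^2/2$. The only difference is that the paper tracks explicit constants, $\mathrm{Var}(X_j)\le \tfrac32 q^2 n^3$ and $m = 14\,n^3\log(2/\delta)/q^2$, while you leave them implicit.
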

\begin{proof}
    For each edge $e=\{u,v\}\in E(G)$, define
    \[
        Z_e := \mathbbm 1 \left\{u\text{ and }v\text{ are both retained} \right\}.
    \]
    Then
    \[
        X_j = \sum_{e\in E(G)}Z_e.
    \]
    For every edge $e$, we have
    \begin{equation}
        \E[Z_e] = q^2 \quad \text{and} \quad \Var(Z_e) = q^2(1-q^2).
    \end{equation}
    If two distinct edges $e$ and $e'$ are vertex-disjoint, then $Z_e, Z_{e'}$ are independent random variables, and hence
    \[
        \Cov(Z_e,Z_{e'})=0.
    \]
    If $e$ and $e'$ share exactly one endpoint, then all three vertices
    belonging to the two edges must be retained for both edges to survive.
    Therefore,
    \[
        \E[Z_eZ_{e'}] = q^3,
    \]
    and consequently,
    \[
        \Cov(Z_e,Z_{e'}) = q^3-q^4.
    \]
    
    The number of unordered pairs of edges sharing a vertex $v$ is $\binom{\deg_G(v)}{2}$, so that
    \begin{equation}
        \label{eq:trace-edge-count-variance}
        \Var(X_j)
        =
        |E(G)|q^2(1-q^2)
        +
        2(q^3-q^4)
        \sum_{v\in V(G)}
        \binom{\deg_G(v)}{2}.
    \end{equation}
    Using
    \begin{equation*}
        |E(G)| \leq \binom{n}{2} \quad \text{,and} \quad \sum_{v\in V(G)} \binom{\deg_G(v)}{2} \leq n\binom{n-1}{2},
    \end{equation*}
    we obtain
    \begin{align}
        \Var(X_j) &\leq \binom{n}{2} q^2 + 2q^3 n\binom{n-1}{2}\\
        &\leq \frac{3}{2} q^2n^3. \label{eq:trace-edge-count-variance-upper}
    \end{align}
    Applying Bernstein's inequality, we have 
    \begin{align*}
        \Proba{\abs{q^{-2} \bar X_m - \abs{E(G)}} \geq \frac{1}{2}} &\leq 2 \Exp{-\frac{1}{4} \frac{m q^2}{3 n^3 + \frac{1}{3} n^2}}\\
        &\leq 2 \Exp{- \frac{m q^2}{14n^3}}
    \end{align*}
    Taking $m = 14\frac{\log(2/\delta) n^3}{q^2}$ traces are enough for \eqref{eq:edgeCount_SuffCond} to hold with probability at least $1-\delta$. 
\end{proof}

% \begin{theorembox}{Size Reconstruction Upper Bound}{EdgeCountRec_Achievability}
%     % \label{thm:EdgeCountRec_Achievability}
%     Using $m = O\left(\frac{n^3}{q^2} \log(1/\delta)\right)$ traces are sufficient to recover the size of $G$ with probability $> 1- \delta$. 
% \end{theorembox}

We also provide a lower bound on the trace complexity of recovering the size of $G$. 

\begin{theorembox}{Size Reconstruction Lower Bound}{LowBound_EdgeCount}
    % \label{thm:LowBound_EdgeCount}
    $m = \Omega\left(n^2 \log \left(\frac{1}{\delta}\right)\right)$ degree-sequence traces are necessary to recover the size of $G$ with probability $ > 1 - \delta$. 
\end{theorembox}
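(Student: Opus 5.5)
We follow the outline given in the introduction and use a two-point (Le Cam) argument. Let $G_0 = K_{n/2,n/2}$ and $G_1 = K_{n/2-1,n/2+1}$, with $n$ even. Their sizes are $n^2/4$ and $n^2/4-1$, so any estimator that recovers the size with probability $>1-\delta$ for every graph must distinguish $\phi_p(G_0)$ from $\phi_p(G_1)$ with error at most $\delta$ from $m$ i.i.d.\ samples. By the standard Hellinger bound for product measures, $1-\mathrm{TV}(P^{\otimes m},Q^{\otimes m}) \geq \frac12\left(1-H^2(P,Q)\right)^{2m}$ (up to the normalisation convention for $H^2$). Hence error $\le\delta$ forces $m \gtrsim \log(1/\delta)/H^2(\phi_p(G_0),\phi_p(G_1))$. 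It therefore suffices to show $H^2 = O(n^{-2})$.

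\textbf{Reducing to a small parameter.} For a complete bipartite graph with sides $A,B$, a trace is $K_{a,b}$, where $a\sim\mathrm{Bin}(|A|,q)$ and $b\sim\mathrm{Bin}(|B|,q)$ are independent. Its degree-count vector is a deterministic function of the unordered pair $\{a,b\}$, so by data processing
\[
H^2(\phi_p(G_0),\phi_p(G_1)) \le H^2\left(\mathcal L(a,b)_{G_0},\ \mathcal L(a,b)_{G_1}\right).
\]
Now write $G_1$'s sides as $(n/2-1)+1$ and $n/2$: take side $A$ of size $n/2-1$ plus one extra vertex $w$ whose retention bit is $Z_w$. In $G_0$, $w$ joins $A$. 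In $G_1$, $w$ joins $B$, which has size $n/2$. Coupling all other retention bits, we get
\[
(a,b)_{G_0} = (S+Z_w,\ T), \qquad (a,b)_{G_1} = (S,\ T+Z_w),
\]
with $S\sim\mathrm{Bin}(n/2-1,q)$ and $T\sim\mathrm{Bin}(n/2,q)$ independent. We should not compare these ordered pairs directly: ordered pairs differ at order $1/n$ in $H^2$, which would only give $m=\Omega(n)$. The gain comes from the symmetrisation, since only the unordered pair $\{a,b\}$ is observed. So we compare the laws of $\{S+Z_w,T\}$ and $\{S,T+Z_w\}$.

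\textbf{Main computation (the expected obstacle).} Write $\mu=qn/2$. Both unordered laws are mixtures over $Z_w$. On $Z_w=0$ they coincide exactly. On $Z_w=1$ we must compare the law of $\{S+1,T\}$ with that of $\{S,T+1\}$. Centre and scale by setting $x=(a-\mu)/\sigma$ and $y=(b-\mu)/\sigma$, with $\sigma\asymp\sqrt n$. Then $\{S+1,T\}$ and $\{S,T+1\}$ differ by a shift of size $1/\sigma$ in the first versus the second coordinate of a nearly exchangeable pair. (The difference of one in the sizes of $S$ and $T$ contributes another $O(1/\sigma)$ asymmetry, and it must be tracked.) Symmetrising kills the first-order term in the likelihood ratio, because the linear parts $\partial_x$ and $\partial_y$ agree after averaging over the swap. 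The likelihood ratio of the unordered laws is therefore $1+O(\sigma^{-2})\cdot(\text{polynomial in }x,y)$. Hellinger is quadratic in that deviation, so this crude estimate gives $H^2=O(\sigma^{-4})=O(n^{-2})$. To make it rigorous, we plan to compute the ratio explicitly using binomial pmf ratios, $P(S=s-1)/P(S=s) = \frac{s}{n/2-s}\cdot\frac{p}{q}$, and similarly for $T$. We then Taylor-expand around $s,t\approx\mu$, bound the region $|s-\mu|,|t-\mu|>C\sqrt{n\log n}$ separately by Chernoff bounds, and bound the chi-square divergence (which dominates $H^2$) by $O(n^{-2})$ using binomial central moments of order at most $4$.

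The delicate points are verifying that the first-order terms really cancel once the unequal sizes $n/2-1$ versus $n/2$ are accounted for, and handling the diagonal $a=b$, where the unordered map is two-to-one. If exact cancellation fails, a fallback is to choose sizes $(n/2+1,n/2-1)$ versus $(n/2,n/2)$ so that the parameter pairs are exactly symmetric about $n/2$. Combining $H^2=O(n^{-2})$ with the Le Cam bound yields $m=\Omega(n^2\log(1/\delta))$, treating $q$ as a constant.
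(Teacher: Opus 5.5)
Your proposal is correct, and its skeleton matches the paper's. Both use the same pair $K_{N,N}$ versus $K_{N+1,N-1}$, a Hellinger two-point bound with tensorization, and data processing down to the unordered pair of surviving side sizes. Both then rely on a likelihood-ratio expansion whose first-order term vanishes by symmetrisation, with the remainder controlled by binomial moments up to order four plus exponentially small tails.

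Where you differ is the intermediate reduction. You couple through one switching vertex $w$ and use joint convexity of $H^2$ to reduce to comparing $\{S+1,T\}$ with $\{S,T+1\}$, where $S\sim\mathrm{Bin}(N-1,q)$ and $T\sim\mathrm{Bin}(N,q)$. This isolates where the two graphs actually differ and gains an explicit factor $q$, but it leaves you with the unequal sizes $N-1$ and $N$ to track. The paper instead compares $P=\mu_N\otimes\mu_N$ directly with the symmetrised mixture $F=\frac12(\mu_{N-1}\otimes\mu_{N+1}+\mu_{N+1}\otimes\mu_{N-1})$, which is exactly your ``fallback''. There the exchangeable reference makes the cancellation a one-line identity: $F/P=1-U_XU_Y+\frac{\varepsilon_X+\varepsilon_Y}{2}-\frac{U_X\varepsilon_Y+U_Y\varepsilon_X}{2}$.

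You do not need the fallback, because the cancellation you were worried about holds in your route too. Measured against $\mu_N^{\otimes 2}$, the symmetrised law of $\{S+1,T\}$ has density ratio exactly $\frac{a+b}{2Nq}$. The symmetrised law of $\{S,T+1\}$ has ratio $\frac{1}{2Nq}\bigl[\frac{(N-a)b}{N-b+1}+\frac{(N-b)a}{N-a+1}\bigr]$. In the bulk this equals $\frac{a+b}{2Nq}$ plus an error of order $N^{-1}\bigl(1+\tilde a^{2}+\tilde b^{2}\bigr)$, where $\tilde a=(a-Nq)/\sqrt N$; the first-order terms $\mp\frac{q}{p}(a-b)$ from the two orderings cancel. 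So $H^2=O(N^{-2})$ follows along your plan.

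One detail to keep: the law of $\{S,T+1\}$ charges pairs containing $N+1$, which $\{S+1,T\}$ never produces. The $\chi^2$ bound can therefore only be applied on the bulk, with the rest charged to tail mass. Your planned Chernoff split covers this, just as the paper's $S^C$ term $q^{N+1}$ does.
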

The lower bound is proved by selecting two bipartite graphs whose induced statistical distributions are close, despite having different sizes. We defer this proof to \Cref{sec_app:Proof_LowBound_EdgeCount}. Of note, the lower bound holds even for algorithms that observe traces from $\Phi_p(G)$.

%%%%%%%%%%%%%%%%%%%%%%%%%%%%%%%
\section{Degree Sequence Reconstruction: Binomial Mixtures}
\label{sec:DegSeqRecon_BinomMixtures}

Transitioning to the problem of recovering the degree sequence of a graph using traces, we describe a method, based on the results of \cite{krishnamurthy2019trace}, for recovering the degree sequence using $m = \exp\{\tilde O(n^{1/3})\}$. We use the shorthand, $B(a,q)$ to denote a random variable following a Binom$(a, q)$ distribution. 

For the reader's convenience, we state \cite[Theorem 8]{krishnamurthy2019trace} here,
\begin{lemmabox}{}{BinomMixtureSampComp}
    % \label{lem:BinomMixtureSampComp}
    Let $\mc M$ be a mixture of $d = \poly(n)$ binomial distributions, described by drawing a sample from $B(a_i, q)$ with probability $\alpha_i$, where $q = \Omega\left(\sqrt{\frac{\log n}{a}}\right) \in (0,1)$, $0 \leq a_1, \dots, a_d \leq a$, and each $\alpha_i \in [0,1]$ has polynomial precision. Then $\Exp{O((a/q)^{1/3})\log ^{2/3}n }$ samples suffice to learn the parameters of $\mc M$ exactly, with high probability. 
    
\end{lemmabox}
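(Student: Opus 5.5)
The plan is the standard mean-based argument from trace reconstruction: a lower bound on the distance between the output distributions of any two distinct candidate mixtures, followed by an empirical test among all candidates. Write a candidate mixture as a vector $\alpha=(\alpha_0,\dots,\alpha_a)$, where $\alpha_j$ is the total weight on components with $a_i=j$. Since the weights have polynomial precision, there are at most $\Exp{O(a\log n)}$ candidates. For two distinct candidates, the difference $\beta=\alpha-\alpha'$ is nonzero and, after scaling by the common denominator, has integer entries of size $\poly(n)$.

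\textbf{Step 1 (separation through a generating function).} Let $X\sim\mc M$, and let $\mu_k,\mu'_k$ be the probability mass functions of the two mixtures. For $w\in\mb C$ put $z:=1-q+qw$; then
\begin{equation*}
    \sum_{k}(\mu_k-\mu'_k)w^k=\E_{\mc M}[w^X]-\E_{\mc M'}[w^X]=\sum_{j=0}^{a}\beta_j z^j=:P_\beta(z).
\end{equation*}
If we find $w$ on the unit circle with $|P_\beta(z)|\ge\varepsilon$, then by the triangle inequality some $k\le a$ satisfies $|\mu_k-\mu'_k|\ge\varepsilon/(a+1)$.

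\textbf{Step 2 (the main obstacle: a Littlewood-type lower bound).} We need a lower bound on $\max|P_\beta|$ over the image of the arc $\{w=e^{i\theta}:|\theta|\le L\}$. This image is a short curve inside the unit disk, tangent to the unit circle at $z=1$, with $|z|^2\approx1-q(1-q)\theta^2$. We would invoke a Borwein--Erdélyi style bound, in the form used by Nazarov--Peres and De--O'Donnell--Servedio, for polynomials with integer coefficients of size $\poly(n)$ (after factoring out the lowest power of $z$). On such a curve this bound gives $\max|P_\beta|\ge\Exp{-O(\log n/(qL))}$. Two losses compete:
\begin{itemize}
    \item the curvature factor $|z|^a\approx\Exp{-aqL^2}$ coming from the high-degree terms,
    \item the factor $1/(qL)$ coming from how short the arc is.
\end{itemize}
Balancing $aqL^2$ against $\log n/(qL)$ gives $L\asymp(\log n)^{1/3}q^{-2/3}a^{-1/3}$. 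With this choice the separation is $\varepsilon=\Exp{-O((a/q)^{1/3}\log^{2/3}n)}$. The hypothesis $q=\Omega(\sqrt{\log n/a})$ is what we expect to guarantee $L\le 1$, so that the arc is admissible. Carrying out this balancing rigorously, with the correct version of the complex-analytic lemma, is the step we expect to be hardest.

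\textbf{Step 3 (estimation and decoding).} Draw $N$ samples and form the empirical probability mass function $\hat\mu$. By Hoeffding's inequality and a union bound over the $a+1$ coordinates, $\|\hat\mu-\mu\|_\infty<\varepsilon/(2(a+1))$ with high probability once $N=O(a^2\varepsilon^{-2}\log a)$, which is $\Exp{O((a/q)^{1/3}\log^{2/3}n)}$. The decoder outputs any candidate $\alpha'$ whose mass function $\mu'$ satisfies $\|\hat\mu-\mu'\|_\infty<\varepsilon/(2(a+1))$. The true mixture passes this test. By Step 1, any other candidate differs from $\mu$ by at least $\varepsilon/(a+1)$ in some coordinate, so it fails the test. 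This recovers the parameters of $\mc M$ exactly. The decoder runs through exponentially many candidates, which matches the paper's remark that no fast decoder is known.
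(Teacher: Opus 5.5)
The paper gives no proof of this lemma. It is quoted from \cite[Theorem 8]{krishnamurthy2019trace} and used as a black box. You instead reconstruct the mean-based complex-analytic argument behind that theorem, in the style of De--O'Donnell--Servedio and Nazarov--Peres. The frame is sound. The aggregated weights $\alpha_j$ are exactly the identifiable parameters. The identity $\sum_k(\mu_k-\mu'_k)w^k=P_\beta(1-q+qw)$ with $|w|=1$ turns $|P_\beta|\ge\varepsilon$ somewhere on the arc into an $\ell_\infty$ gap of $\varepsilon/(a+1)$. The ball test needs only the single event $\|\hat\mu-\mu\|_\infty<\varepsilon/(2(a+1))$, with no union bound over the $e^{O(a\log n)}$ candidates. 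Beyond being self-contained, your route explains the hypothesis: $q=\Omega(\sqrt{\log n/a})$ is precisely the condition that the optimal half-width $L\asymp(\log n/(aq^2))^{1/3}$ is $O(1)$.

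The genuine gap is Step 2. It carries all of the content and is only asserted, and as written it is slightly off in two ways.

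\emph{Which polynomial the bound is for.} The bound $\exp(-O(\log n/(qL)))$ on $\gamma:=\{p+qe^{i\theta}:|\theta|\le L\}$ must be read for $Q(z):=n^{C}z^{-s}P_\beta(z)$, not for $P_\beta$ itself. Here $s$ is the least index with $\beta_s\neq0$, so $Q$ has integer coefficients of size at most $n^{C}$ and $|Q(0)|\ge1$. If the bound held for $P_\beta$, there would be nothing to balance: $L\asymp1$ would already give $n^{O(1/q)}$ samples. The curvature loss is exactly the factor $|z|^{s}\ge|z|^{a}\ge e^{-apqL^{2}}$ that you pay for dividing out $z^{s}$. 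It does not come from the high-degree terms.

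\emph{Where the Borwein--Erd\'elyi bound applies.} Borwein--Erd\'elyi controls arcs of the unit circle, whereas $\gamma$ lies inside the disk. If you apply it on the unit circle with $w=(z-p)/q$, you get $|w|^{a}\le e^{paL^{2}/(2q^{2})}$ and only $\exp(O((a/q^{2})^{1/3}\log^{2/3}n))$. To get the stated $(a/q)^{1/3}$ you need a bound on $\gamma$ itself, and here it is elementary.
\begin{itemize}
\item Let $\Omega$ be the unit disk minus the thin closed cap between $\gamma$ and the unit circle, and let $\omega$ be the harmonic measure of $\gamma$ at $0$ in $\Omega$.
\item Subharmonicity of $\log|Q|$ gives $0\le\log|Q(0)|\le\omega\log\max_\gamma|Q|+(1-\omega)\log((a+1)n^{C})$.
\item Compare with the unit-circle arc of half-angle $\arg(p+qe^{iL})\ge c\,qL$. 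The cap's two lateral segments have length $O(pqL^{2})$, hence negligible harmonic measure for $L\le L_0$. This gives $\omega\ge c'qL$, so $\max_\gamma|Q|\ge((a+1)n^{C})^{-1/(c'qL)}$.
\end{itemize}
Since polynomial precision already costs a factor $\log n$, the sharp Borwein--Erd\'elyi lemma is not needed. With $L=\min\{L_0,(\log n/(aq^{2}))^{1/3}\}$ this yields $\varepsilon=\exp(-O((a/q)^{1/3}\log^{2/3}n))$, and the rest of your argument goes through.
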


To recover the degree sequence we reduce the problem to parameter estimation by converting trace samples to samples from a binomial mixture model via rejection sampling. Specifically, the binomial mixture we sample from has weights $\alpha_i = \frac{d_i}{n}$, so that applying \Cref{lem:BinomMixtureSampComp} recovers the degree sequence.

\begin{theorembox}{}{binomMixtureTraceParamRecon}
    % \label{thm:binomMixtureTraceParamRecon}
    For sufficiently large $n$ there exists an algorithm to recover the degree sequence of a graph $G$ using 
    \begin{equation*}
        m = \frac{1}{q}\Exp{ O\left(\left(\frac{n}{q}\right)^{1/3} \log^{2/3}(n)\right)}
    \end{equation*}
    traces. 
\end{theorembox}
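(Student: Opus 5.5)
The reduction is via rejection sampling. From each degree-sequence trace $t\sim\phi_p(G)$ we want one sample from the binomial mixture $\mc M$ that picks $B(\ell,q)$ with probability $\alpha_\ell = d_{G,\ell}/n$, for $\ell=0,\dots,n-1$.

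\textbf{The rejection step.} Given a trace $t$, let $N=\sum_i t_i$ be the number of surviving vertices. With probability $N/n$ we accept, and otherwise we discard the trace. On acceptance we pick a surviving vertex uniformly at random and output its degree $i$ in the trace, i.e.\ we output index $i$ with probability $t_i/N$.

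The unconditional probability that we accept and output $i$ is
\[
\E\left[\frac{N}{n}\cdot\frac{t_i}{N}\right]=\frac{\E[t_i]}{n}=\frac{(v_G)_i}{n}.
\]
By \eqref{eq:expected-trace-degree-count}, this equals
\[
\frac{1}{n}\sum_\ell d_{G,\ell}\, q\binom{\ell}{i}q^ip^{\ell-i}=q\sum_\ell \alpha_\ell \Pr[B(\ell,q)=i].
\]
Summing over $i$, the acceptance probability is exactly $q$, and conditioned on acceptance the output is distributed exactly as $\mc M$. The algorithm only uses the linearity of expectation, so it does not matter that the full law of $\phi_p(G)$ depends on topology. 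Distinct traces are independent, so the accepted outputs are i.i.d.\ samples from $\mc M$.

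\textbf{Verifying the hypotheses of \Cref{lem:BinomMixtureSampComp}.} There are $n=\poly(n)$ components with $a=n-1$. The weights $d_\ell/n$ are multiples of $1/n$, so they have polynomial precision. Since $q$ is fixed, the condition $q=\Omega(\sqrt{\log n/n})$ holds for sufficiently large $n$. The lemma then says that
\[
m'=\Exp{O\left((n/q)^{1/3}\log^{2/3}n\right)}
\]
mixture samples suffice to recover all $\alpha_\ell$ exactly with high probability, and hence $d_\ell=n\alpha_\ell$. Zero weights are allowed: components with $\alpha_\ell=0$ can be treated as absent, or equivalently recovered as zero.

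\textbf{Counting traces.} The number of accepted samples among $m$ traces is $\mathrm{Binom}(m,q)$. Take $m=2m'/q$, and at least $m'$ more than $C\log(1/\delta)/q$. A Chernoff bound then gives at least $m'$ acceptances with high probability. A union bound with the failure probability of the lemma gives overall success. The total is $\frac1q\Exp{O((n/q)^{1/3}\log^{2/3}n)}$, where the constant $2$ and the additive logarithmic term are absorbed into the exponent.

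\textbf{Main obstacle.} The delicate step is checking that rejection sampling produces \emph{exact} mixture samples despite the traces not being i.i.d.\ across vertices: degrees within one trace are correlated. Outputting only one vertex per trace, with the $N/n$ acceptance factor, resolves this through the identity $\E[t_i]/n$ above. The rest is bookkeeping against \Cref{lem:BinomMixtureSampComp}.
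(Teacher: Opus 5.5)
Your proposal is correct and follows essentially the same route as the paper. It uses the same rejection step (accept with probability $\|t\|_1/n$, then output index $i$ with probability $t_i/\|t\|_1$), the same identity $\Pr[\text{output } i]=\E[t_i]/n=q\,\mathcal{M}(i)$ giving acceptance probability $q$ and exact mixture samples, a Chernoff bound on the number of acceptances, and \Cref{lem:BinomMixtureSampComp} with $a\approx n$. The only slip is the garbled sentence on the trace count, which should read $m=\lceil (2m'+C\log(1/\delta))/q\rceil$; the paper takes $C=8$.
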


\begin{proof}
    We begin by describing the process of reducing the degree sequence problem to a problem of learning the parameters of a binomial mixture by rejection sampling. 

    \begin{tcolorbox}[
        colback=white,
        colframe=black,
        boxrule=0.8pt,
        arc=1pt,
        left=6pt,
        right=6pt,
        top=6pt,
        bottom=6pt
    ]
        \label{box:RejectionSampling}
        \textit{Rejection Sampling Procedure}: Given $t \in \Z_{\geq}^{n}$ a degree sequence trace
        \begin{enumerate}
            \item Accept $t$ with probability $\frac{\|t\|_1}{n}$, otherwise set $b := \perp$
            \item If we accept, take $b := i$ with probability $\frac{t_i}{\|t\|_1}$
        \end{enumerate}
        Then conditioned on accepting, $b \sim \sum_{i=0}^{n-1} \frac{d_i}{n} B(i,q)$. 
    \end{tcolorbox}

    Let $t = (t_0, \dots, t_{n-1})$ be an arbitrary trace of $G$. From \eqref{eq:expected-trace-degree-count}, we have 
    \begin{equation*}
        \E[t_r] = q \sum_{\ell = r}^{n-1} d_\ell \binom{\ell}{r} q^r p^{\ell - r}.
    \end{equation*}
    Take $\alpha_\ell = \frac{d_\ell}{n} \in [0,1]$. Let $M_d$ be a binomial mixture defined so that 
    \begin{align*}
        M_d(r) &:= \Proba{M_d = r}\\
        &= \sum_{\ell = 0}^{n-1} \alpha_\ell \Proba{B(\ell, q) = r}\\
        &= \sum_{\ell = r}^{n-1} \frac{d_\ell}{n} \binom{\ell}{r} q^r p^{\ell - r}\\
        = \frac{\E[t_r]}{n q}.
    \end{align*}
    
    Define $X$ to be a random variable resulting by setting $X = \perp$ with probability $1 - \frac{\oneNorm{t}}{n}$, and otherwise taking $X = i$ with probability $\frac{t_i}{\oneNorm{t}}$. 
    This is equivalent to accepting the sample with probability $\frac{\|t\|_1}{n}$, and letting $X$ be its degree.
    
    For $t \neq 0$, we have 
    \begin{equation*}
        \Proba{X = r| t} = \frac{\|t\|_1}{n} \cdot \frac{t_r}{\|t\|_1} = \frac{t_r}{n}
    \quad \text{and} \quad
        \Proba{X = \perp| t} = 1 - \frac{\|t\|_1}{n}.
    \end{equation*}
    Now, 
    \begin{align*}
        \Proba{X = r} &= \sum_{t_r \in \{0, \dots, n\}} \Proba{X = r | t_r} \Proba{t_r}\\
        &= \sum_{t_r} \frac{t_r}{n} \Proba{t_r} \;= \frac{\E[t_r]}{n}
        \;= q M_d(r).
    \end{align*}
    Furthermore, $\Proba{X \neq \perp} = q$, so that 
    \begin{equation*}
        \Proba{X = r | X \neq \perp} = M_d(r).
    \end{equation*}
    Therefore, 
    \begin{equation*}
        X | \{X \neq \perp\} \sim \sum_{\ell = 0}^{n-1} \frac{d_\ell}{n} B(\ell, q),
    \end{equation*}
    is a mixture of binomial distributions with mixture coefficients $\frac{d_\ell}{n}$. By performing rejection sampling in this way on each trace, we have probability $q$ of sampling directly from this binomial mixture. 

    Given a specified success probability, $\delta \in (0,1/2)$ and a desired number of binomial mixture samples, $M$, take 
    \begin{equation*}
        m = \left\lceil \frac{2M + 8 \log(1/\delta)}{q} \right\rceil,
    \end{equation*}
    traces. The number of accepted samples is $Y \sim \rm{Bin}(m, q)$, so that $\E[Y] = m q \geq 2M$. The binomial Chernoff bound gives us  
    \begin{align*}
        \Proba{Y < M} &\leq \Proba{Y < \frac{\E[Y]}{2}}\\
        &\leq e^{-\frac{mq}{8}}\\
        &\leq \delta.
    \end{align*}
    so that with probability $\geq 1 - \delta$ we recover $M$ samples from the binomial mixture. Thus, transferring from a problem of trace reconstruction to binomial mixture parameter estimation costs us $O\left(\frac{M + \log(1/\delta)}{q}\right)$ traces. Applying \Cref{lem:BinomMixtureSampComp} with $a = n$ recovers the parameters of the mixture. 
\end{proof}

While this method gives the best known trace complexity for recovering the degree sequence, we remark that \cite[Theorem 8]{krishnamurthy2019trace} gives only an information theoretic separation between possible binomial mixtures. We are unaware of any sub-exponential algorithms for recovering the mixture components, while an exhaustive search takes exponential time.

%%%%%%%%%%%%%%%%%%%%%%%%%%%%%%%
\section{Degree Sequence Reconstruction: Degree Moments}
\label{sec:DegSeqRecon_EdgeMoments}

We give a second algorithm to recover the degree sequence, utilizing the degree moments described in \Cref{subsec:degree-moments}.  
The result is split into three parts: In the first we convert our statistical problem to an algebraic problem by exploiting a relationship between the post-channel degree moments of $\Phi_p(G)$ and the true degree moments of $G$. Next, we prove that $k = \tilde \Theta (\sqrt{n})$ degree moments are necessary and sufficient to uniquely identify the degree sequence of $G$. Finally, we provide an algorithm to efficiently compute the degree sequence of $G$ using its degree moments.

%%%%%%%%%%%%%%%%%%%%%%%%%%%%%%%
\subsection{From traces to degree moments}
\label{subsec:EdgeMoment_StatisticsToAlgebra}
% For now, assume that you are given $k \in \mb N$, the number of moments required to reconstruct the degree sequence of $G$. Let $v_G = \E[\phi_p(G)]$, and suppose that you know $f_0(v_G), \dots, f_{k-1}(v_G)$ exactly. 

We begin by relating the factorial moments of the degree-sequence traces to the factorial moments of the degree sequence of the original graph $G$. 

For $u \in \R^{n}$, let 
\begin{equation}
    \label{eq:fallingFactorialMoment}
    h_\ell(u) := \sum_{r=0}^{n-1} r^{\underline\ell} u_r, 
\end{equation}
be the $\ell^{\text{th}}$ falling factorial moment of $u$, where $r^{\underline{\ell}} = r(r-1) \dots (r-\ell + 1)$ denotes the falling factorial power. 
% Define $w_\ell := h_\ell(d_G)$. 
We have the following identity relating $h_\ell(v_G)$ and $h_\ell(d_G)$,
\begin{propositionbox}{}{FallingFactorialMomentScaling}
    % \label{prop:FallingFactorialMomentScaling}
    For every $\ell \in \mb N$,
    \begin{equation}
        \label{eq:FallingFactorialMomentScaling}
        h_\ell(v_G) = q^{\ell+1} h_\ell(d_G).
    \end{equation}
\end{propositionbox}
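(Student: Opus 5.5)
The plan is to substitute the explicit formula \eqref{eq:expected-trace-degree-count} for $(v_G)_r$ into the definition \eqref{eq:fallingFactorialMoment} of $h_\ell$. We then swap the order of summation so that the sum runs over original degrees $j$ first, and evaluate the inner sum in closed form using the factorial moments of a binomial distribution. Concretely, we write
\begin{equation*}
    h_\ell(v_G) = \sum_{r=0}^{n-1} r^{\underline\ell}\, q^{r+1}\sum_{j=r}^{n-1}\binom{j}{r}p^{j-r}d_{G,j}
    = q\sum_{j=0}^{n-1} d_{G,j}\sum_{r=0}^{j} r^{\underline\ell}\binom{j}{r}q^r p^{j-r}.
\end{equation*}
The interchange is over a finite triangular index set $0\le r\le j\le n-1$, so no convergence issues arise.

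The inner sum is exactly $\E\left[B(j,q)^{\underline\ell}\right]$, the $\ell$th factorial moment of a $\mathrm{Binom}(j,q)$ variable. We evaluate it with the standard identity
\begin{equation*}
    r^{\underline\ell}\binom{j}{r} = j^{\underline\ell}\binom{j-\ell}{r-\ell}
    \quad\text{for } r\ge \ell.
\end{equation*}
Terms with $r<\ell$ vanish because $r^{\underline\ell}=0$ there. Reindexing with $s=r-\ell$, the inner sum becomes
\begin{equation*}
    j^{\underline\ell}q^\ell\sum_{s=0}^{j-\ell}\binom{j-\ell}{s}q^s p^{j-\ell-s} = j^{\underline\ell}q^\ell(q+p)^{j-\ell} = j^{\underline\ell}q^\ell,
\end{equation*}
using $p+q=1$. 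When $j<\ell$ the sum is empty and equals $0$, which agrees with $j^{\underline\ell}=0$.

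Substituting back gives
\begin{equation*}
    h_\ell(v_G) = q\sum_j d_{G,j}\, j^{\underline\ell}q^\ell = q^{\ell+1}h_\ell(d_G),
\end{equation*}
as claimed.

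There is no real obstacle. The only care needed is the edge cases $j<\ell$ and $\ell=0$, and for $\ell=0$ the result reduces to $\|v_G\|_1 = q n$. An alternative, more combinatorial route counts ordered pairs consisting of a retained vertex together with an ordered $\ell$-tuple of distinct retained neighbours. Each such configuration survives with probability $q^{\ell+1}$, and summing over vertices gives $q^{\ell+1}\sum_v \deg(v)^{\underline\ell}$. I would mention this as intuition but present the algebraic computation above as the proof.
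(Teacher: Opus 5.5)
Your proof is correct and is essentially the paper's argument: the inner sum you evaluate is exactly $q\,\E\bigl[\mathrm{Bin}(j,q)^{\underline\ell}\bigr] = q^{\ell+1} j^{\underline\ell}$, which the paper obtains for each vertex as $\E[I_v X_v^{\underline\ell}] = \E[I_v]\,\E[X_v^{\underline\ell}]$. The differences are small. You start from the closed form \eqref{eq:expected-trace-degree-count} and swap sums, where the paper writes $h_\ell(t)=\sum_v I_v X_v^{\underline\ell}$, and you prove the binomial factorial-moment identity where the paper cites it.
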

\begin{proof}
    Let $t = \phi_p(G)$ be a degree sequence trace. For every $v \in V(G)$, let $I_v$ be a binary random variable indicating if $v$ survived the deletion channel. Suppose $v$ has degree $r$ in $G$, and take 
    \begin{equation*}
        X_v := \sum_{u \in N_G(v)} I_u,
    \end{equation*}
    where $N_G(v)$ denotes the vertex neighborhood of $v$. Then $X_v \sim \rm{Bin}(r, q)$. Note that, $I_v$ and $X_v$ are independent, so that 
    \begin{align*}
        \E[I_v X_v^{\underline{\ell}}] &= \E[I_v] \E[X_v^{\underline{\ell}}]\\
        &= q^{\ell + 1} r^{\underline{\ell}}. 
    \end{align*}
    In the above expression we used a fact about the falling factorial moment \cite{Potts1953FactorialMoments},  $\E[X^{\underline{\ell}}] = r^{\underline{\ell}} q^\ell$. 

    Then we have, 
    \begin{align*}
        h_\ell(t) &= \sum_{v \in V(G)} I_v X^{\underline{\ell}}_v,
    \end{align*}
    and since there are $d_{G,r}$ vertices in $G$ with degree $r$,
    \begin{align*}
        h_\ell(v_G) &= h_\ell(\E[t]) = \E[h_\ell(t)]\\
        &= \sum_{r=0}^{n-1} d_{G,r} q^{\ell+1} r^{\underline{\ell}} 
        = q^{\ell+1} h_\ell(d_G).
    \end{align*}
\end{proof}
Now we will need some results about polynomial change of basis. Specifically, let 
\begin{equation*}
    \mathbf{a}^{(k)}(x) = \begin{bmatrix}
        1\\
        x\\
        \vdots\\
        x^{k-1}
    \end{bmatrix}, \qquad \mathbf{b}^{(k)}(x) = \begin{bmatrix}
        1\\
        x^{\underline{1}}\\
        \vdots\\
        x^{\underline{k-1}}
    \end{bmatrix},
\end{equation*}
be polynomials expressed in the standard basis and falling factorial basis respectively. Take $s(r,j)$ and $\bfrac{r}{j}$ to be as in \eqref{eq:StirlingNumbersFirstKind} and \eqref{eq:StirlingNumbersSecondKind} respectively. 
Define 
\begin{equation*}
    \mathbf{S}_k := [s(r,j)]_{r,j=0}^{k-1}, \qquad \mathbf{T}_k := \left[\bfrac{r}{j}\right]_{r,j=0}^{k-1}.
\end{equation*}
Equations \eqref{eq:StirlingNumbersFirstKind} and \eqref{eq:StirlingNumbersSecondKind} are expressed linear algebraically as
\begin{equation*}
    \mathbf{b}^{(k)}(x) = \mathbf{S}_k \mathbf{a}^{(k)}(x), \qquad \mathbf{a}^{(k)}(x) = \mathbf{T}_k \mathbf{b}^{(k)}(x).
\end{equation*}
Furthermore, both $\mathbf{S}_k, \mathbf{T}_k$ are invertible, and  $\mathbf{S}_k = \mathbf{T}_k^{-1}$. 

\noindent For $u \in \R^{n}$, take 
\begin{equation*}
    \mathbf{m}^{(k)}(u) := \begin{bmatrix}
        \sum_{r} u_r\\
        \sum_r r u_r\\
        \vdots\\
        \sum_r r^{k-1} u_r
    \end{bmatrix} = \begin{bmatrix}
        f_0(u)\\
        f_1(u)\\
        \vdots\\
        f_{k-1}(u)
    \end{bmatrix}, \qquad \mathbf{h}^{(k)}(u) := \begin{bmatrix}
        \sum_r u_r\\
        \sum_r r^{\underline{1}} u_r\\
        \vdots\\
        \sum_{r} r^{\underline{k-1}} u_r
    \end{bmatrix} = \begin{bmatrix}
        h_0(u)\\
        h_1(u)\\
        \vdots\\
        h_{k-1}(u)
    \end{bmatrix}. 
\end{equation*}
Here $\mathbf{m}_k(u)$ denotes the vector of the first $k$ degree moments of $u$, and $\mathbf{h}_k(u)$ the vector of the first $k$ falling factorial moments of $u$. 
We have 
\begin{equation}
    \label{eq:ChangeOfBasisFallingFacAndStandard}
    \mathbf{h}^{(k)}(u) = \mathbf{S}_k \mathbf{m}^{(k)}(u), \qquad \mathbf{m}^{(k)}(u) = \mathbf{T}_{k} \mathbf{h}^{(k)}(u).
\end{equation} 

\noindent Now, take
\begin{align*}
    \mathbf{h}_G &:= \begin{bmatrix}
        h_0(d_G)  & h_1(d_G) & \dots & h_{k-1}(d_G)
    \end{bmatrix}^T\\
    \mathbf{h}_v &:= \begin{bmatrix}
        h_0(v_G) & h_1(v_G) & \dots & h_{k-1}(v_G)
    \end{bmatrix}^T\\
    \mathbf{m}_G &:= \begin{bmatrix}
        f_0(d_G) & f_1(d_G) & \dots & f_{k-1}(d_G)
    \end{bmatrix}^T\\
    \mathbf{m}_v &:= \begin{bmatrix}
        f_0(v_G) & f_1(v_G) & \dots & f_{k-1}(v_G)
    \end{bmatrix}^T. 
\end{align*}
Then for $D_q = \diag(q, q^2, \dots, q^{k})$, \Cref{prop:FallingFactorialMomentScaling} tells us that  
\begin{equation*}
    \mathbf{h}_v = D_q \mathbf{h}_G,
\end{equation*}
while \eqref{eq:ChangeOfBasisFallingFacAndStandard} gives us 
\begin{equation}
    \label{eq:conversionFromNoisyFallingFactorialToTrueEdgeMoments}
    % \mathbf{m}_G = \mathbf{S}_k D_q^{-1} \mathbf{S}_{k}^{-1} \mathbf{m}_v.
    \mathbf{m}_G = \mathbf{T}_k D_q^{-1} \mathbf{h}_v.
\end{equation}
Therefore, there is a bijection between the post-channel falling factorial degree moments and true degree moments of $G$. 

The final step of this section is to determine the trace complexity of estimating the first $k_n$ post-channel degree moments of $G$. 
\begin{theorembox}{}{TraceComplexity_EdgeMomentEstimation}
    % \label{thm:TraceComplexity_EdgeMomentEstimation}

    Let $k_n = o(n)$. The first $k_n$ degree moments of $G$ can be estimated exactly with probability $\geq 1 - \delta$ using 
    \begin{align*}
        m &= \log(2k_n/\delta)\Exp{(2k_n + 2) \left[\log\frac{2n}{q}\right]}\\
        &=\Exp{\tilde O (k_n)} \log\left(\frac{k_n}{\delta}\right)
    \end{align*}
    traces. 
    
\end{theorembox}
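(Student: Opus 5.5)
The plan is to estimate the post-channel falling factorial moments $h_\ell(v_G)$, for $\ell = 0,\dots,k_n-1$, by empirical averages. We then push the estimates through the linear map \eqref{eq:conversionFromNoisyFallingFactorialToTrueEdgeMoments}, $\mathbf m_G = \mathbf T_k D_q^{-1}\mathbf h_v$. Finally we round to the nearest integer, which is legitimate because each $f_r(d_G)=\sum_i i^r d_{G,i}$ is an integer.

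Concretely, for each trace $t^{(j)}$ set $Y_\ell^{(j)} := h_\ell(t^{(j)})$. By linearity and \Cref{prop:FallingFactorialMomentScaling}, $\E[Y^{(j)}_\ell] = h_\ell(v_G) = q^{\ell+1}h_\ell(d_G)$. Each $Y^{(j)}_\ell$ satisfies $0\le Y^{(j)}_\ell \le \sum_r r^{\underline\ell} t_r \le n\cdot n^{\ell}$, since at most $n$ vertices survive and each has degree below $n$. Let $\widehat{\mathbf h}_v$ be the vector of empirical means, and define $\widehat{\mathbf m} := \lceil \mathbf T_k D_q^{-1}\widehat{\mathbf h}_v\rfloor$, rounding coordinatewise. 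Exact recovery follows once $\|\mathbf T_k D_q^{-1}(\widehat{\mathbf h}_v - \mathbf h_v)\|_\infty < 1/2$.

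To get this we bound the operator norm. Every entry of $D_q^{-1}$ is at most $q^{-k}$. The $r$-th row of $\mathbf T_k$ has nonnegative entries summing to the Bell number $B_r \le r^r \le k^k$; with $k=o(n)$ we may also use the crude bound $B_r\le n^{r}$. It therefore suffices that each coordinate satisfies $|\widehat h_\ell - h_\ell(v_G)| < \epsilon := \tfrac12 q^{k} k^{-k}$, or the slightly cruder $\tfrac12 (q/n)^{k}$. Here we use $k^{k}\le n^{k}$, and keeping a factor $2$ inside, as in $(q/2n)^k$, is what should produce the stated $\log(2n/q)$.

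Hoeffding's inequality for variables in $[0,n^{\ell+1}]$ then gives, for each $\ell$,
\[
\Proba{|\widehat h_\ell - h_\ell(v_G)|\ge \epsilon}\le 2\Exp{-\frac{2m\epsilon^2}{n^{2\ell+2}}}.
\]
A union bound over the $k_n$ coordinates, with failure probability $\delta/k_n$ each, requires $m \gtrsim \epsilon^{-2} n^{2k}\log(2k_n/\delta)$. Substituting $\epsilon\approx (q/2n)^{k}$ makes this $\log(2k_n/\delta)\,\Exp{(2k_n+2)\log(2n/q)}$ up to constants absorbed into the exponent.

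The main obstacle is only bookkeeping. We must control the row sums of $\mathbf T_k$ (Stirling numbers of the second kind, i.e. Bell numbers) and the amplification by $D_q^{-1}$ so that the exponent comes out as $(2k_n+2)\log(2n/q)$ rather than something larger. I would check carefully that $B_r\,q^{-r}\cdot n^{r+1}$ is dominated by $(2n/q)^{k+1}$. This holds for $k\le n$ because $B_r\le r^r\le n^r$.
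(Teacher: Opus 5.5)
\textbf{Gap: rounding after the change of basis.} Your estimator rounds only after applying $\mathbf T_k D_q^{-1}$. You therefore pay for the error amplification of that map, and this cost does not fit inside the explicit bound $m=\log(2k_n/\delta)\exp\{(2k_n+2)\log(2n/q)\}$ that you claim to reach.

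With your row-sum bound, the per-coordinate accuracy must be of order $q^{k}/B_{k-1}$. Hoeffding, with range up to $n^{k}$, then needs $m$ of order $B_{k-1}^2 (n/q)^{2k}\log(2k_n/\delta)$. Since $B_{k-1}=k^{(1-o(1))k}$, this exceeds $\log(2k_n/\delta)(2n/q)^{2k_n+2}$ by a superexponential factor.

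The check you propose at the end is false. At $r=k-1$, the claim that $B_r q^{-r}n^{r+1}$ is dominated by $(2n/q)^{k+1}$ amounts to $B_{k-1}\leq 2^{k+1}n/q$. Your bound $B_r\leq n^r$ only gives $n^{2k-1}q^{-(k-1)}$ for the left side, which is far larger than $(2n/q)^{k+1}$.

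The substitution step has the same problem. With $\epsilon\approx (q/2n)^k$, the quantity $\epsilon^{-2}n^{2k}$ equals $(2n/q)^{2k}n^{2k}$, and the extra $n^{2k}$ cannot be absorbed into constants. What your argument actually proves, for example with $\epsilon=\tfrac12 q^k k^{-k}$, is $m=2(nk_n/q)^{2k_n}\log(2k_n/\delta)$. That is only the second, asymptotic line $\exp\{\tilde O(k_n)\}\log(k_n/\delta)$ of the statement.

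\textbf{The fix.} The falling-factorial moments $h_\ell(d_G)=\sum_r r^{\underline{\ell}}d_{G,r}$ are themselves nonnegative integers. The paper therefore rounds each $\bar\eta_\ell/q^{\ell+1}$ to the nearest integer before changing basis, which recovers $\mathbf h_G$ exactly. Only then does it apply the integer matrix $\mathbf T_k$, which propagates no error at all (\Cref{alg:MomentEst}), so Bell numbers never enter.

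Each coordinate then only needs $|\bar\eta_\ell-h_\ell(v_G)|<q^{\ell+1}/2$ for i.i.d.\ variables in $[0,n^{\ell+1}]$. Hoeffding gives failure probability $2\exp\{-mq^{2\ell+2}/(2n^{2\ell+2})\}$. A union bound over $\ell<k_n$, with $\delta/k_n$ each, gives $m=2\log(2k_n/\delta)(n/q)^{2k_n}$, which is below the stated $\log(2k_n/\delta)\exp\{(2k_n+2)\log(2n/q)\}$.

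The rest of your plan matches the paper: unbiasedness via \Cref{prop:FallingFactorialMomentScaling}, the range bound $n^{\ell+1}$, and Hoeffding plus a union bound. The only change needed is to move the rounding before $\mathbf T_k$.
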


\begin{proof}
    % Let $k \equiv k_n$.

    % Recall that the $j^{\text{th}}$ falling factorial moment of $u \in \R^{n}$ is defined as
    % \begin{equation*}
    %     h_j(u) := \sum_{r=0}^{n-1} r^{\underline{j}} u_r, 
    % \end{equation*}
    
    By \eqref{eq:conversionFromNoisyFallingFactorialToTrueEdgeMoments}, it is enough to estimate the falling factorial moments with high probability.  
    Let $\bar \eta_\ell = \frac{1}{m} \sum_{i=1}^m h_\ell(t^{(i)})$ be the $\ell^{\text{th}}$ moment sample mean, and let our estimator for $h_\ell(d_G)$ be 
    \begin{equation*}
        \hat h_\ell = \left\lceil  \frac{\bar \eta_\ell}{q^{\ell + 1}} \right\rfloor.
    \end{equation*}
    Note that $0 \leq h_\ell(t^{(i)}) \leq n^{\ell+1}$. To recover $h_\ell(d_G)$ exactly, we require $\abs{\hat h_\ell - h_\ell(d_G)} \leq \frac{1}{2}$, as $h_\ell(d_G) \in \mb N$.   
    
    From \Cref{prop:FallingFactorialMomentScaling}, we have $\E[\bar \eta_\ell] = q^{\ell + 1 } h_\ell(d_G)$, so that  Hoeffding's inequality gives us 
    \begin{align*}
        \Proba{\abs{\frac{\bar \eta_\ell}{q^{\ell+1}} - h_\ell(d_G)} \geq \frac{1}{2}} &= \Proba{\abs{\bar \eta_\ell - h_\ell(v_G)} \geq \frac{q^{\ell+1}}{2}}\\
        &\leq 2 \Exp{- \frac{q^{2\ell+2} m }{2n^{2\ell+2}}}.
    \end{align*}
    Taking 
    \begin{align*}
        m 
        &= 2 \log\left(\frac{2}{\delta}\right) \Exp{(2\ell + 2) \left[\log\frac{2n}{q}\right]}
        % &= \Exp{\tilde O\left(\ell\right)} \log(1/\delta),
    \end{align*}
    we get 
    \begin{equation*}
        \Proba{\hat h_\ell = h_\ell(d_G)} \geq 1 - \delta. 
    \end{equation*}
    The final result follows from a union bound over all $k_n = O(n)$ falling factorial degree moments. 
\end{proof}

The implementation of this procedure is described in \Cref{alg:MomentEst}.  \Cref{thm:TraceComplexity_EdgeMomentEstimation} means that for $m = \Exp{\tilde O (k)} \log\left(\frac{1}{\delta}\right)$, the output of the algorithm matches the true graph moments, i.e. $\hat f_k = f_k(d_G)$, with probability $> 1-\delta$. The runtime of this algorithm is $O(m \poly(n,k))$.

\begin{algorithm}
    \caption{EstimateMoments($k, [t^{(1)}, \dots, t^{(m)}]$)}
    \label{alg:MomentEst}
    \begin{algorithmic}[1]
    
    \FOR{$\ell \in \{0, \dots, k-1\}$}
        \STATE $\bar \eta_\ell \gets \frac{1}{m} \sum_{i=1}^m h_\ell(t^{(i)})$
    
        \STATE $\hat h_\ell \gets \left\lceil \frac{\bar \eta_\ell}{q^{\ell+1}} \right\rfloor$
    \ENDFOR
    \STATE $\hat{\mathbf{w}}_G \gets \begin{bmatrix}
      \hat h_0 &
      \hat h_1 &
      \ldots &
      \hat h_{k-1}
    \end{bmatrix}^T$
    \STATE $\hat{\mathbf{f}} \gets \begin{bmatrix}
      \hat f_{0} &
      \hat f_{1}&
      \ldots &
      \hat f_{k-1}
    \end{bmatrix}^T := T_k \hat{\mathbf{w}}_G$
    % \STATE 
    \RETURN $\hat{\mathbf{f}}$.
    \end{algorithmic}
\end{algorithm}

%%%%%%%%%%%%%%%%%%%%%%%%%%%%%%%
\subsection{How Many Moments Do We Need?}
\label{subsec:EdgeMoment_NumberOfRequiredMoments}

% Previously, we assumed that we knew the number of moments required to uniquely reconstruct the degree sequence of a graph, and that it is sub-linear in $n$. For a given $n$, let $k$ denote this number. 
Let $k$ denote the number of degree-sequence moments required to uniquely reconstruct the degree sequence of a graph. 
In this section we will show that $k = \tilde \Theta(\sqrt n)$ moments are necessary and sufficient to recover degree sequence uniquely. 
To prove this, we convert the reconstruction problem to a problem of finding polynomials with a high multiplicity root at $x=1$ while satisfying an $L_1$ constraint on their coefficients. Using tight bounds for polynomials of this type  \cite{erdelyi2016coppersmith} we prove an upper bound of $k=O(\sqrt{n \log n})$. 
Furthermore, we also construct a polynomial which corresponds to a pair of graphical degree sequences whose first $k=\Omega\left(\sqrt{\frac{n}{\log n}}\right)$ degree moments match.   

Recall that $\mc D_n$ denotes the set of graphical degree sequences, and $\mc H_n$ denotes the set of relaxed degree sequences,
\begin{align*}
    \mc D_n = \{ d_G: G \in \mc G_n\}, \qquad \mc H_n = \left\{ d \in \Z_{\geq 0}^n: \sum_{i=0}^{n-1} d_i = n \right\}.
\end{align*}

Let $d^{(1)} \neq d^{(2)}$ denote two vectors in $\mc D_n$, so that 
\begin{align*}
    f_\ell(d^{(1)}) = f_\ell(d^{(2)}), \quad \forall \ell \in \{0,\dots, k-1\},
\end{align*}
i.e. they have the same first $k$ moments. Since the moments are linear functionals, taking $u = d^{(1)} - d^{(2)} \neq 0$,
\begin{equation*}
    f_\ell(u) = 0, \quad \forall \ell \in \{0, \dots, k-1\}.
\end{equation*}
Notice that $\oneNorm{u} \leq 2 n$. 

Now, let $\mathbf{A}_{n,k} \in \Z^{k \times n}$ be a Vandermonde matrix given by 
\begin{equation*}
    \mathbf{A}_{n,k} = [x^y]_{x=0,y=0}^{k-1, n-1} = \begin{bmatrix}
        1 & 1 & 1 & \dots & 1\\
        0 & 1 & 2 & \dots & n-1\\
        0^2 & 1^2 & 2^2 & \dots & (n-1)^2\\
        \vdots & \vdots & \vdots & \ddots & \vdots\\
        0^{k-1} & 1^{k-1} & 2^{k-1} & \dots & (n-1)^{k-1}
    \end{bmatrix}.
\end{equation*}
Then 
\begin{equation*}
    A_{n,k} u = \begin{bmatrix}
        f_0(u)\\
        f_1(u)\\
        \vdots\\
        f_{k-1}(u)
    \end{bmatrix} = 0.
\end{equation*}
Thus, if two degree sequences have matching first $k$ moments, then their difference is in the null space of $A_{n,k}$. 

We consider the relaxed class of degree sequences $\mc H_n$, and we want to understand when the moment map is injective on $\mc H_n$. Define the polynomial 
\begin{equation*}
    P_u(x) := \sum_{r=0}^{n-1} u_r x^r.
\end{equation*}
The next result shows that $u \in \rm{ker}(\mathbf{A}_{n,k})$ is equivalent to a divisibility condition on $P_u$. 
\begin{propositionbox}{}{KernelEquivToRootsOfUnity}
    % \label{prop:KernelEquivToRootsOfUnity}
    
    For all $u \in \Z^n$ and $1 \leq k \leq n$, 
    \begin{equation*}
        A_{n,k} u = 0 \quad \iff \quad (x-1)^k | P_u(x).
    \end{equation*}
    % i.e. $x=1$ is a root of $P_u(x)$ with multiplicity at least $k$. 
\end{propositionbox}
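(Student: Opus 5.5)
The plan is to express both conditions in terms of derivatives of $P_u$ at $x=1$, and then relate the two bases by the Stirling-number change of basis from \eqref{eq:ChangeOfBasisFallingFacAndStandard}.

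First, $(x-1)^k \mid P_u(x)$ holds if and only if $P_u^{(\ell)}(1)=0$ for all $\ell\in\{0,\dots,k-1\}$. This is the standard multiplicity criterion: expand $P_u$ in Taylor form around $x=1$, $P_u(x)=\sum_{\ell}\frac{P_u^{(\ell)}(1)}{\ell!}(x-1)^\ell$. We work over $\R$ (or $\mb Q$); divisibility by the monic polynomial $(x-1)^k$ is the same over $\Z$ and $\R$, so integrality of $u$ causes no trouble.

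Second, differentiate term by term:
\begin{equation*}
    P_u^{(\ell)}(1)=\sum_{r=0}^{n-1} r^{\underline{\ell}}\, u_r = h_\ell(u).
\end{equation*}
So the divisibility condition is equivalent to $h_\ell(u)=0$ for $0\le\ell\le k-1$, i.e.\ $\mathbf{h}^{(k)}(u)=0$.

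Third, by \eqref{eq:ChangeOfBasisFallingFacAndStandard} we have $\mathbf{h}^{(k)}(u)=\mathbf{S}_k\mathbf{m}^{(k)}(u)$ with $\mathbf{S}_k$ invertible. It is lower triangular with unit diagonal, since $s(r,r)=1$. Hence $\mathbf{h}^{(k)}(u)=0$ if and only if $\mathbf{m}^{(k)}(u)=0$. Since $\mathbf{m}^{(k)}(u)=A_{n,k}u$ by definition, this is exactly the condition $A_{n,k}u=0$, and chaining the three equivalences proves the claim.

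There is no real obstacle. The only points needing care are the $0^0=1$ convention in the first row of $A_{n,k}$, which matches $f_0$, and the restriction $k\le n$. The latter guarantees that only moments of order at most $n-1$ appear, so that $P_u$, of degree at most $n-1$, can in principle be divisible by $(x-1)^k$; the equivalence itself is purely algebraic, however.
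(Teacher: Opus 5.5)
Your proposal is correct and takes essentially the same route as the paper. The paper also identifies $P_u^{(r)}(1)=\sum_i i^{\underline{r}}u_i$, and then uses that $\{x^j\}_{j<k}$ and $\{x^{\underline{j}}\}_{j<k}$ are both bases for polynomials of degree less than $k$, so the functional $L_u(g)=\sum_i g(i)u_i$ vanishes on one basis iff it vanishes on the other; this is exactly your invertibility of $\mathbf{S}_k$, and your Taylor-expansion justification of the multiplicity criterion spells out a standard step the paper leaves implicit.
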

Here $P_2(x) | P_1(x)$ denotes  $P_2(x)$ dividing $P_1(x)$ evenly, so that $P_1(x) = P_2(x) P_3(x)$, where $P_1, P_2, P_3$ are polynomials. Note that $(x-1)^k | P_u(x)$ is equivalent to stating that $x=1$ is a root of $P_u(x)$ with multiplicity at least $k$. 

\begin{proof}
    The ordinary powers, 
    \begin{equation*}
        1, x, \dots, x^{k-1}
    \end{equation*}
    and the falling factorial powers
    \begin{equation*}
        1, x^{\underline{1}}, \dots, x^{\underline{k-1}}
    \end{equation*}
    are two bases for the space of polynomials of degree at most $k-1$.  
    
    For $u = (u_0, \dots, u_{n-1})$, define the linear functional 
    \begin{equation*}
        L_u(g) = \sum_{i=0}^{n-1} g(i) u_i,
    \end{equation*}
    on the space of polynomials $g$. 

    $(\implies)$ Suppose that $A_{n,k} u = 0$, which means that 
    \begin{equation*}
        L_u(x^j) = \sum_{i=0}^{n-1} i^j u_i = 0, \quad \forall 0, \dots, k-1
    \end{equation*}
    Since $1, x, \dots, x^{k-1}$ form a basis for all polynomials of degree at most $k-1$, by linearity 
    \begin{equation*}
        L_u(g) = 0
    \end{equation*}
    for every such polynomial, $g_r$. In particular, taking $g_r(x) = x^{\underline{r}}$, we have 
    \begin{equation*}
        L_u(g_r) = \sum_{i=0}^{n-1} i^{\underline{r}} u_i = 0, \quad \forall r \in \{0, \dots, k-1\}.
    \end{equation*}
    Differentiating $P_u(x)$ $r$-times, and evaluating at $1$ gives
    \begin{equation*}
        P_u^{(r)}(1) = \sum_{i=0}^{n-1} i^{\underline{r}} u_i = 0,
    \end{equation*}
    so that $x=1$ is a zero of multiplicity at least $k$. 

    $(\impliedby)$ The converse follows a similar process to the forward direction, only this time using the fact that $\{x^{\underline{r}}\}_{r=0}^{k-1}$ forms a basis for the set of polynomials of degree $k-1$, including $g_r(x) = x^r$. Starting from $P_u^{(r)}(1) = 0$, $\forall r \in \{0, \dots, k-1\}$, we get that $L_u(x^{\underline{r}}) = 0$, and therefore $L_u(g_r) = 0$. 
\end{proof}

Taking $u$ to be the difference between two relaxed degree sequences, then agreeing on the first $k$ moments is equivalent to
\begin{equation}
    \label{eq:ConditionsOnPolynomial}
    (x-1)^k | P_u(x), \quad \text{ and } \quad \oneNorm{u} \leq 2n.
\end{equation}
Then finding the number of moments required to reconstruct the degree sequence is equivalent to bounding the largest possible multiplicity of $x=1$ of polynomials with bounded coefficients.

Using this transformation, we will show the following result,
\begin{theorembox}{}{GraphicalKnTheta}
    Let $k_n^{\rm{Graph}}$ be as in \eqref{align:k_nG_Def}. Then  
    \begin{equation*}
        k_n^{\rm{Graph}} = \tilde\Theta\left(\sqrt n\right)
    \end{equation*}
\end{theorembox}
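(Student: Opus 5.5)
The plan has two halves: an upper bound $k_n^{\rm{Graph}} \le k_n^{\mc H} = O(\sqrt{n\log n})$ and a lower bound $k_n^{\rm{Graph}} = \Omega(\sqrt{n/\log n})$. Both run through \Cref{prop:KernelEquivToRootsOfUnity} and the reformulation \eqref{eq:ConditionsOnPolynomial}.

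\emph{Upper bound.} Since $\mc D_n \subseteq \mc H_n$, it suffices to bound $k_n^{\mc H}$. Suppose $d^{(1)}\neq d^{(2)}\in\mc H_n$ share their first $k$ moments, and set $u=d^{(1)}-d^{(2)}$. Then $u$ is a nonzero integer vector with $\oneNorm{u}\le 2n$ and $(x-1)^k \mid P_u$. We may divide out any power of $x$ so that $P_u(0)\neq 0$; this only lowers the degree. The input is the extremal bound of \cite[Theorem 2.2]{erdelyi2016coppersmith}: a polynomial with integer coefficients, $|P(0)|\ge 1$, and $\oneNorm{\cdot}\le M$ has a zero at $1$ of multiplicity at most $C\sqrt{n\log M}$ (more precisely $O(\sqrt{n(1+\log M)})$ in the normalized form). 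With $M=2n$ this gives $k\le C\sqrt{n\log n}$. So any $k > C\sqrt{n\log n}$ makes $\mathcal M_k$ injective on $\mc H_n$, hence on $\mc D_n$. The one thing to check is that the hypotheses of that theorem match: integer coefficients with $|a_0|=1$ after normalization, or $a_0\neq0$ integer, with an $L_1$ rather than $L_\infty$ norm. If the theorem is stated with $\|\cdot\|_\infty$, I would use $\|u\|_\infty\le n$ instead, which gives the same $\log n$.

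\emph{Lower bound (main obstacle).} Here we need two distinct \emph{graphical} degree sequences with equal first $k$ moments, with $k=c\sqrt{n/\log n}$. I would use pigeonhole. Fix $N\approx n/2$ and consider degree sequences in which about $n/2$ "free'' vertices take degrees in a middle window $[a,a+N)$ with $a\approx n/4$ and $N\approx n/4$, while the remaining vertices are chosen so that the sequence is graphical. A cleaner route is to work with sequences all of whose entries lie in $[n/4, 3n/4]$ and have even sum. By the Erdős–Gallai theorem \cite{erdos1960grafok,Choudum_1986}, such a "balanced'' sequence is graphical: for $\ell\le n/2$ the left side is at most $\ell\cdot 3n/4 \le \ell(\ell-1)+\sum_{i>\ell}\min(d_i,\ell)$, since each of the $n-\ell$ remaining terms contributes at least $\min(n/4,\ell)$. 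I would verify the two regimes $\ell \le n/4$ and $\ell > n/4$ separately.

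The number of such multisets is $\binom{n/2+n}{n}^{\Theta(1)} = e^{\Theta(n)}$. Each moment vector $(f_0,\dots,f_{k-1})$ has $f_j\le n\cdot n^{j}$, so the number of possible moment vectors is at most $\prod_{j<k} n^{j+1}\le e^{k^2\log n}$. If $k^2\log n < cn$, pigeonhole yields two distinct graphical sequences with identical first $k$ moments. The parity constraint costs only a factor of $2$. This gives $k_n^{\rm{Graph}}>c\sqrt{n/\log n}$.

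The delicate step is certifying graphicality uniformly for the whole family via Erdős–Gallai. If the window computation gets tight, I would shrink the range to $[n/3,2n/3]$, or restrict to a family where every degree lies in $[n/2 - n/8, n/2+n/8]$; the count is still $e^{\Omega(n)}$. Combining the two bounds gives $k_n^{\rm{Graph}}=\tilde\Theta(\sqrt n)$.
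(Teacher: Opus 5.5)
Your upper bound is the paper's argument (\Cref{thm:upperBound_kn}). The Erd\'elyi bound is already stated in the $L_1$ form $|a_0|\ge\alpha\sum_{j\ge1}|a_j|$, so taking $\alpha=1/(2n)$ works directly and no $L_\infty$ variant is needed.

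In the lower bound, your primary graphicality claim is false. Even-sum sequences with all entries in $[n/4,3n/4]$ need not be graphical. For $4\mid n$, take $n/2$ vertices of degree $3n/4$ and $n/2$ of degree $n/4$. At $t=n/2$ the Erd\H{o}s--Gallai left side is $3n^2/8$, but the right side is $\frac n2\bigl(\frac n2-1\bigr)+\frac n2\cdot\frac n4=\frac{3n^2}{8}-\frac n2$. Intuitively, each high vertex needs at least $n/4+1$ low neighbors, while the low side has total degree only $n^2/8$. So your displayed inequality for $\ell\le n/2$ fails at $\ell=n/2$, and you genuinely need one of your fallbacks.

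The window $[n/3,2n/3]$ does work. For $t\le n/3$ the right side is exactly $t(n-1)\ge 2nt/3$. For $t>n/3$, the right side minus the left side is at least $t^2-(n+1)t+n^2/3$, whose discriminant $-n^2/3+2n+1$ is negative for $n\ge 7$. You could also use the paper's \Cref{prop:CertificateOfGraphicallity}, which gives the same certificate for the window $[n/4,n/2]$. Parity is handled most cleanly by allowing only even degrees in the window; the family still has size $e^{\Theta(n)}$. Your bound of $n^{k(k+1)/2}$ moment vectors and pigeonhole then give $k_n^{\mathrm{Graph}}=\Omega(\sqrt{n/\log n})$.

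With this repair, your lower bound is a valid route that differs from the paper's. The paper pigeonholes over the $2^{L-1}$ polynomials $1+\sum_{i=1}^{L-1}a_ix^i$ with $a_i\in\{0,1\}$ and $L=n/4$, which produces a nonzero $\{-1,0,1\}$ kernel vector $u$. It then shifts $u$ by $L$, splits it into positive and negative parts, and pads both with a common vector supported on degrees $2L$ and $2L-1$ (the latter fixing parity), before applying the window certificate.

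You instead pigeonhole directly over a family whose members are all graphical. A collision is then immediately the required pair, and the shift/split/pad/parity bookkeeping disappears; the cost is counting multisets instead of subsets, which is routine. The paper's construction gives slightly more: two degree-count vectors differing by a $\{-1,0,1\}$ vector of $\ell_1$-norm at most $n/4$, with the explicit constant $\frac12\sqrt{n/\log_2(n+1)}-1$. Yours gives the same order with a constant that depends on the window.
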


\subsubsection{Upper Bound}

We will need a result in \cite{erdelyi2016coppersmith}, so we first give an overview of the notation. 
Take $N \in \N$, and $\alpha \in (0, 1/2]$ to be constants. Let $\kappa_1(N, \alpha)$ be the largest possible value of $k$ so that $\exists Q \not \equiv 0$ a polynomial with complex coefficients of the form 
\begin{equation}
    \label{eq:KappaRequirements}
    Q(x)= \sum_{j=0}^{N} a_j x^j, \quad \abs{a_0} \geq \alpha \sum_{j=1}^{N} \abs{a_j},
\end{equation}
such that $(x-1)^k|Q(x)$. Let $\mu_\infty(N, \alpha)$ be the smallest value of $k$ for which there is a polynomial of degree $k$ with complex coefficients such that 
\begin{equation}
    \label{eq:MuRequirements}
    \abs{Q(0)} > \frac{1}{\alpha} \max_{1 \leq j \leq N} \abs{Q(j)}.
\end{equation}
The following is a specialized version of \cite[Theorem 2.2]{erdelyi2016coppersmith} 
\begin{lemmabox}{}{PolyBound_ExternalResult}
    % \label{lem:PolyBound_ExternalResult}
    Let $\kappa_1$ be the largest value of $k$ so that $\exists Q \not \equiv 0$ a polynomial with complex coefficients satisfying \eqref{eq:KappaRequirements}, and take $\mu_\infty$ to be smallest value of $k$ so that there exists a polynomial $Q$ satisfying \eqref{eq:MuRequirements}. Then 
    \begin{equation*}
        \frac{2}{7} \min\left\{ \sqrt{N \log \frac{1}{\alpha}}, N \right\} \leq \kappa_1(N, \alpha) \leq \mu_\infty(N, \alpha) \leq 13 \min\left\{\sqrt{N \log \frac{1}{\alpha}}, N\right\} + 4.
    \end{equation*} 
\end{lemmabox}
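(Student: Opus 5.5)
The statement is \cite[Theorem 2.2]{erdelyi2016coppersmith} specialised to our setting, so in the paper it can simply be quoted. If I had to prove it, I would split it into three pieces: the middle inequality $\kappa_1\le\mu_\infty$ by a duality argument, the upper bound on $\mu_\infty$ by an explicit polynomial, and the lower bound on $\kappa_1$ by converting a Coppersmith--Rivlin type inequality back through the same duality.

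\emph{Middle inequality.} I would reuse the idea behind \Cref{prop:KernelEquivToRootsOfUnity}, now over $\mb C$.
\begin{itemize}
\item If $(x-1)^k\mid Q(x)=\sum_{j=0}^N a_jx^j$, then $\sum_{j=0}^N a_jR(j)=0$ for every $R\in\mb C_{k-1}[x]$. This is because the functionals $R\mapsto\sum_j a_jR(j)$ for $R=x^{\underline r}$ are exactly $Q^{(r)}(1)$.
\item Let $Q$ satisfy \eqref{eq:KappaRequirements} with $Q\not\equiv0$. Then $a_0\ne0$, since otherwise the coefficient condition forces every $a_j=0$.
\item Suppose some $R$ of degree $<k$ satisfied \eqref{eq:MuRequirements}. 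Combining the identity above with the coefficient condition gives
\[
|a_0||R(0)|\le \max_{1\le j\le N}|R(j)|\sum_{j\ge1}|a_j|\le \tfrac1\alpha|a_0|\max_{1\le j\le N}|R(j)|<|a_0||R(0)|,
\]
which is a contradiction.
\item Hence no polynomial of degree $<k$ works whenever multiplicity $k$ is attainable, and so $\kappa_1\le\mu_\infty$.
\end{itemize}

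\emph{Upper bound on $\mu_\infty$.} I would build an explicit $R$ with $|R(0)|>\alpha^{-1}\max_{1\le j\le N}|R(j)|$ and $\deg R\le 13\sqrt{N\log(1/\alpha)}+4$.
\begin{itemize}
\item If $\log(1/\alpha)\ge N$, take $R(x)=\prod_{j=1}^N(x-j)$, of degree $N$. It vanishes at $1,\dots,N$ and not at $0$.
\item Otherwise, set $s\approx\sqrt{N\log(1/\alpha)}$.
\item Take a Chebyshev factor $T_m(\ell(x))$, where $\ell$ maps $[s,N]$ affinely onto $[-1,1]$. Since $\ell(0)\approx 1+2s/N$, we get $|T_m(\ell(0))|\approx e^{m\sqrt{4s/N}}$.
\item Multiply it by a factor $\prod_{j=1}^{s'}(1-x/j)$, with $s'\lesssim s$. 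This factor kills or controls the integer points $1,\dots,s$ that lie outside the Chebyshev interval.
\item Balancing $m$ and $s'$ against $\log(1/\alpha)$ is what should produce degree $O(\sqrt{N\log(1/\alpha)})$.
\end{itemize}
This balancing is the delicate constant-chasing step, and I expect it to be the main obstacle. The factor $\prod(1-x/j)$ is bounded by $\binom{x-1}{s'}$ at large integers, so its growth at the Chebyshev points has to be absorbed by the Chebyshev decay. The first thing I would try is to choose $s'=m$ and check the product pointwise on $[s,N]$.

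\emph{Lower bound on $\kappa_1$.}
\begin{itemize}
\item First I would show the dual fact: every $R$ of degree $d<c\sqrt{N\log(1/\alpha)}$ satisfies $|R(0)|\le\alpha^{-1}\max_{1\le j\le N}|R(j)|$. I would use the Coppersmith--Rivlin inequality: a degree-$d$ polynomial bounded by $1$ on $\{1,\dots,N\}$ is bounded by $e^{Cd^2/N}$ on $[0,N]$. This is $\le1/\alpha$ for $d\le c\sqrt{N\log(1/\alpha)}$. In the range $\log(1/\alpha)\ge N$ I would use $d<N$ and Lagrange interpolation instead.
\item I would then convert back by finite-dimensional LP / Hahn--Banach duality. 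Consider the minimum of $\sum_{j\ge1}|a_j|$ over $(a_j)$ with $a_0=1$ and $\sum_j a_jR(j)=0$ for all $R\in\mb C_{k-1}[x]$. This minimum equals the reciprocal of $\sup\{|R(0)|:\max_{1\le j\le N}|R(j)|\le1\}$, because the $\ell_1$ norm is dual to the sup norm on the points $\{1,\dots,N\}$.
\item So the bound from the first step yields a $Q$ satisfying \eqref{eq:KappaRequirements} with $(x-1)^k\mid Q$, where $k=\lceil\tfrac27\min\{\sqrt{N\log(1/\alpha)},N\}\rceil$.
\end{itemize}
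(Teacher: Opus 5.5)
Quoting \cite[Theorem 2.2]{erdelyi2016coppersmith}, as you say in your first sentence, is exactly what the paper does; it gives no proof of its own. In your self-contained sketch, the two duality arguments are sound, and together they even give $\kappa_1(N,\alpha)=\mu_\infty(N,\alpha)$. The Lagrange-interpolation argument for $\log(1/\alpha)\ge N$ also works, since $\sum_{j=1}^N\binom{N}{j}<e^N\le 1/\alpha$. The constant $\frac27$, however, only follows from a Coppersmith--Rivlin inequality with an explicit constant that also covers the extrapolation from $\{1,\dots,N\}$ to the point $0$. With an unspecified $C$ you get some absolute $c$ in place of $\frac27$.

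The genuine gap is the upper bound on $\mu_\infty$: your construction cannot give degree $O(\sqrt{N\log(1/\alpha)})$. At integers $x>s'$ you have $\bigl|\prod_{j=1}^{s'}(1-x/j)\bigr|=\binom{x-1}{s'}$, which equals $\binom{N-1}{s'}=e^{\Theta(s'\log(N/s'))}$ at $x=N$. Meanwhile $T_m(\ell(x))$ does not decay anywhere on $[s,N]$: it equioscillates with modulus $1$, and $|T_m(\ell(N))|=1$. So there is no ``Chebyshev decay'' to absorb the binomial growth. The only compensation is the single number $|T_m(\ell(0))|\le e^{2m\sqrt{s/(N-s)}}$, so you need $2m\sqrt{s/(N-s)}\ge\log\binom{N-1}{s'}+\log(1/\alpha)$. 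With $s'=m$ this forces $2\sqrt{s/(N-s)}\ge\log((N-1)/m)$. For $s\approx\sqrt{N\log(1/\alpha)}$ and $m$ of that order, the left side is about $2(\log(1/\alpha)/N)^{1/4}$ and the right side about $\frac12\log(N/\log(1/\alpha))$, so it fails once $N/\log(1/\alpha)$ exceeds a modest constant.

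Re-balancing $s$ and $s'$ does not help:
\begin{itemize}
\item If $s'\ge s-1$, the condition above alone gives, by AM--GM, $m^2\gtrsim N\log(1/\alpha)\log(N/s')$.
\item If $s'<s-1$, look at the first unkilled point $x=s'+1$. There the product equals $1$ and the Chebyshev factor is nearly as large as at $0$. This forces $m(s'+1)\gtrsim\sqrt{sN}\log(1/\alpha)$, and multiplying by the previous condition gives the same bound.
\item If $s'=0$, you need $m\gtrsim\sqrt N\log(1/\alpha)$.
\end{itemize}
In every case you lose a factor $\sqrt{\log(N/\log(1/\alpha))}$ or $\sqrt{\log(1/\alpha)}$; for $\alpha=1/(2n)$ and $N=n-1$ you get order $\sqrt n\log n$, not $\sqrt{n\log n}$. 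Note also that the right scale for the left gap is $d^2/N\approx\log(1/\alpha)$, not $\sqrt{N\log(1/\alpha)}$.

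The missing idea is to let the integer zeros \emph{replace} Chebyshev zeros rather than multiply on top of them. Let $d\le N$, let $y_1<\dots<y_d$ be the zeros of $T_d(1-2x/N)$, and let $s=\lfloor 4d^2/(\pi^2N)\rfloor$. Then $y_i\le N\pi^2i^2/(4d^2)\le i$ for $i\le s$. Put $P(x)=\prod_{i\le s}(x-i)\prod_{i>s}(x-y_i)$. Then:
\begin{itemize}
\item $P$ vanishes at $1,\dots,s$.
\item At integers $s<x\le N$, each replaced factor satisfies $(x-i)/(x-y_i)\le 1$. Hence $|P(x)|\le\max_{[0,N]}\bigl|\prod_i(x-y_i)\bigr|=\prod_i y_i$.
\item $|P(0)|=\prod_i y_i\cdot\prod_{i\le s} i/y_i\ge\prod_i y_i\cdot s^s/s!$.
\end{itemize}
Therefore $\max_{1\le j\le N}|P(j)|/|P(0)|\le s!/s^s\le e\sqrt{s}\,e^{-s}<\alpha$ once $s\ge 2\log(1/\alpha)+2$. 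This holds for $d=\lceil\frac{\pi}{2}\sqrt{N(2\log(1/\alpha)+3)}\rceil=O(\sqrt{N\log(1/\alpha)})$; if this exceeds $N$, use $\prod_{j=1}^N(x-j)$ instead. That construction, or simply citing Erd\'elyi as the paper does, closes the gap.
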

Note that \cite[Theorem 2.2]{erdelyi2016coppersmith} includes a separate bound when $\alpha \in (1/2, 1]$, which is unnecessary for this work.  

Recall that $k_n^{\mc H}$ is the minimum number of degree moments required to distinguish any generalized degree sequence in $\mc H$. Now, we are prepared to state our upper bound on the number of required moments. 
\begin{theorembox}{}{upperBound_kn}
    % \label{thm:upperBound_kn}
    For every $n \geq 2$, we have 
    \begin{equation*}
        k_n^{\mc H} \leq 13 \sqrt{(n-1) \log(2n)} + 5.
    \end{equation*}
\end{theorembox}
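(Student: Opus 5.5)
The plan is to reduce injectivity of the moment map on $\mc H_n$ to the divisibility condition \eqref{eq:ConditionsOnPolynomial}, and then bound the multiplicity of the root $x=1$ with \Cref{lem:PolyBound_ExternalResult}, taking $N=n-1$ and $\alpha = \frac{1}{2n}$.

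\emph{Trivial case.} Let $K := \lfloor 13\sqrt{(n-1)\log(2n)}\rfloor + 5$. If $K \geq n$, the claim holds trivially. Indeed, $\mathbf{A}_{n,n}$ is a square Vandermonde matrix on the distinct nodes $0,\dots,n-1$, hence invertible. So $n$ moments already determine $d$, and $k_n^{\mc H} \leq n \leq K$.

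\emph{Main case, $K < n$.} We show that $\mc M_K$ is injective on $\mc H_n$. Suppose $d^{(1)} \neq d^{(2)}$ in $\mc H_n$ agree on their first $K$ moments, and set $u = d^{(1)} - d^{(2)}$. Then $u \in \Z^n \setminus \{0\}$, $\oneNorm{u} \leq 2n$, and by \Cref{prop:KernelEquivToRootsOfUnity} (valid since $K \leq n$) we have $(x-1)^K \mid P_u(x)$.

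\emph{Normalising the constant term.} The coefficient $u_0$ may vanish, so we cannot apply the lemma to $P_u$ directly. Let $s$ be the smallest index with $u_s \neq 0$ and set $Q(x) := P_u(x)/x^s = \sum_{j=0}^{n-1-s} u_{j+s} x^j$. Since $x$ and $x-1$ are coprime, $(x-1)^K \mid Q(x)$ still holds.

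We regard $Q$ as a polynomial of the form \eqref{eq:KappaRequirements} with $N = n-1$, padding with zero coefficients if needed. Its constant term $a_0 = u_s$ is a nonzero integer, so $\abs{a_0} \geq 1$. Moreover $\sum_{j\geq 1}\abs{a_j} \leq \oneNorm{u} \leq 2n$. Hence
\[
\abs{a_0} \;\geq\; \tfrac{1}{2n}\sum_{j\geq1}\abs{a_j},
\]
so \eqref{eq:KappaRequirements} holds with $\alpha = \frac{1}{2n} \in (0,1/2]$ for $n \geq 2$.

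\emph{Applying the lemma.} By the definition of $\kappa_1$ and \Cref{lem:PolyBound_ExternalResult},
\[
K \;\leq\; \kappa_1\!\left(n-1,\tfrac{1}{2n}\right) \;\leq\; 13\min\left\{\sqrt{(n-1)\log(2n)},\, n-1\right\} + 4 \;\leq\; 13\sqrt{(n-1)\log(2n)} + 4 .
\]
This contradicts the definition of $K$, which exceeds the right-hand side by more than $0$ (it is at least $13\sqrt{(n-1)\log(2n)}+4$ rounded past the next integer). Thus $\mc M_K$ is injective on $\mc H_n$, giving $k_n^{\mc H} \leq K \leq 13\sqrt{(n-1)\log(2n)} + 5$.

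\emph{Main obstacle.} The delicate step is the normalisation. The hypothesis of \eqref{eq:KappaRequirements} needs a constant term that is large relative to the remaining coefficients. This is exactly why we divide by $x^s$ and then use integrality of $u$ to get $\abs{a_0} \geq 1$. We also have to check that $\kappa_1(N,\alpha)$ is defined with degree at most $N$, so that zero-padding to $N = n-1$ is allowed, and handle the degenerate case $N - s = 0$. In that case $Q$ is a nonzero constant, which has no root at $1$, so it is immediately impossible.
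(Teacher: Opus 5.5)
Your proposal is correct and matches the paper's proof essentially step for step: you write $P_u = x^s Q$, use integrality to get $\abs{a_0} \geq 1 \geq \frac{1}{2n}\sum_{j\geq 1}\abs{a_j}$, and invoke \Cref{lem:PolyBound_ExternalResult} with $N = n-1$ and $\alpha = \frac{1}{2n}$, which caps the multiplicity of the root at $1$ by $13\sqrt{(n-1)\log(2n)}+4$. Your extra care does two things the paper leaves implicit: taking the integer $K = \lfloor 13\sqrt{(n-1)\log(2n)}\rfloor + 5$, and handling $K \geq n$ separately, since \Cref{prop:KernelEquivToRootsOfUnity} is only stated for $k \leq n$.
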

\begin{proof}
    As a consequence of \Cref{lem:PolyBound_ExternalResult}, any polynomial
    \begin{equation*}
        Q(x) = a_0 + a_1 x + \dots + a_N x^N
    \end{equation*}
    satisfying
    \begin{equation*}
        \abs{a_0} \geq \alpha \sum_{j=1}^N \abs{a_j}, \quad 0 < \alpha \leq \frac{1}{2},
    \end{equation*}
    can have the multiplicity of the zero of $Q$ at $x=1$ at most
    \begin{equation}
        \label{eq:MaxNumRootOfUnity}
        13 \sqrt{N \log \frac{1}{\alpha}} + 4.
    \end{equation}

    For any $u = d^{(1)} - d^{(2)}$ for $d^{(1)}, d^{(2)} \in \mc H_n$, take $P_u(x) = \sum_{i=0}^{n-1} u_i x^i$. Let $s := \min\{i: u_i \neq 0\}$ and write $P_u(x) = x^s Q(x)$, for $Q(x) = a_0 + \dots a_N x^N$ and $a_0 = u_s \neq 0$. 
    Since 
    \begin{equation*}
        \sum_{j=1}^N \abs{a_j} \leq \oneNorm{u} \leq 2 n,
    \end{equation*}
    we have 
    \begin{equation*}
        \abs{a_0} \geq 1 \geq \frac{1}{2n} \sum_{j=1}^N \abs{a_j}.
    \end{equation*}
    Then the multiplicity of zeros in $Q(x)$ can be at most 
    \begin{equation*}
        13 \sqrt{(n-1) \log (2n)} + 4,
    \end{equation*}
    where we used \eqref{eq:MaxNumRootOfUnity} with $N \leq n-1$ and $\alpha = \frac{1}{2n}$.  
    Then due to \eqref{eq:ConditionsOnPolynomial}, taking 
    \begin{equation*}
        k = 13 \sqrt{(n-1) \log (2n)} + 5
    \end{equation*}
    gives us the result. 
\end{proof}

\subsubsection{Lower Bound}

Here we will show that $\Omega(\sqrt{n/\log n})$ moments are insufficient to distinguish all graphical degree sequences. To prove this, we construct a polynomial with coefficients in $\{-1, 0, 1\}$, which has many roots at $1$. We then show that the specific polynomial we select corresponds to a pair of graphical degree sequences using the Erdös-Gallai theorem. 

% Remember that 
% \begin{align*}
%     k_n^{\rm{Graph}} &:= \min\{k : \mc M_k(d) \text{ is injective on } \mc D_n\},\\
%     k_n^{\mc H} &:= \min\{k : \mc M_k(d) \text{ is injective on } \mc H_n\}
% \end{align*}
% with $\mc M_k(d) := \left(f_0(d), \dots, f_{k-1}(d)\right)$ the moment vector of $d$. 

Recall that we use the term \textit{degree sequence} when referring to a vector that describes the counts of vertices with a certain degree in a graph $G$, and we use the term \textit{ordered degree sequence} to refer to the ordered vector with each element being the degree of a particular vertex in $G$. 

We say that $a = (a_1, \dots, a_n) \in \Z^n$ is graphical \begin{equation}
    a_1 \geq a_2 \geq \dots \geq a_{n},
\end{equation}
and there exists a graph $G$ having the ordered degree sequence $a$. 

Now, we recollect the Erdös-Gallai theorem\cite{TRIPATHI_2003},
\begin{lemmabox}{}{ErdosGallai}
    % \label{lem:ErdosGallai}
    A sequence of positive integers $a = (a_1 \geq a_2 \geq \dots \geq a_n)$ is graphical if and only if $\sum_{i=1}^n a_i$ is even, and for every $t \in \{1, \dots, n-1\}$ we have 
    \begin{equation*}
        \sum_{i=1}^t  a_i \leq t(t-1) + \sum_{i=t+1}^n \min\{t, a_i\}
    \end{equation*}
\end{lemmabox}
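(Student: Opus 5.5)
The plan is the classical two-part argument: a double-counting proof of necessity, and an induction on $\sum_i a_i$ for sufficiency, following Choudum's short proof \cite{Choudum_1986}. I would also allow zero entries, since isolated vertices only add harmless terms $\min\{t,0\}=0$ to the right-hand side.

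\emph{Necessity.} Suppose $G$ realizes $a$, and let $S$ be the set of the $t$ vertices of largest degree. Then $\sum_{i=1}^t a_i$ counts edges inside $S$ twice and edges from $S$ to $V\setminus S$ once. There are at most $\binom{t}{2}$ edges inside $S$, which contribute at most $t(t-1)$. A vertex $i\notin S$ is adjacent to at most $\min\{t,a_i\}$ vertices of $S$. Summing these two bounds gives the inequality. Parity of $\sum_i a_i$ follows from the handshake lemma.

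\emph{Sufficiency: the reduction step.}
\begin{enumerate}
\item Argue by induction on $\sum_i a_i$; the base case is the all-zero sequence, realized by the empty graph.
\item Given a nonzero sequence satisfying the conditions, let $n$ be the index of the smallest positive entry, after discarding trailing zeros.
\item Let $t$ be the largest index with $a_t=a_1$, i.e.\ the last vertex in the top block of equal degrees. If $t=n$, use instead the top vertex that is different from $n$.
\item Define $a'$ by $a'_t=a_t-1$ and $a'_n=a_n-1$, with all other entries unchanged. The choice of $t$ keeps $a'$ non-increasing, except possibly at the end where $a'_n$ may drop, which is harmless because it stays the smallest entry.
\item The parity condition is clearly preserved.
\end{enumerate}

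\emph{Main obstacle: verifying Erdős–Gallai for $a'$.} For each $s$, compare $\mathrm{LHS}_s(a')$ with $\mathrm{RHS}_s(a')$. The left side drops by $1$ exactly when $s\geq t$. The right side drops by at most $1$, and only when the index $n$ lies beyond $s$ with $a_n\leq s$; it is unaffected by the change at $t$ when $t\leq s$.
\begin{itemize}
\item When $s<t$, the left side is unchanged. An inequality can then fail only if it was tight for $a$ and the right side lost $1$ through $\min\{s,a_n\}$.
\item In that tight case, I would use that all of $a_1,\dots,a_{s+1}$ equal $a_1$. Writing the inequality at $s+1$ and subtracting the one at $s$ then yields a contradiction: a tight constraint at $s$ forces $a_1\geq s$, and hence too much slack at $s+1$ unless $a_n>s$.
\item This case analysis, including the boundary cases where $a_n$ becomes $0$ or $t=s$, is the delicate part. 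I would follow Choudum's bookkeeping, or Tripathi--Vijay's reduction of the check to indices at which $a_s>a_{s+1}$.
\end{itemize}

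\emph{Lifting a realization of $a'$.} By induction, $a'$ is realized by some graph $G'$.
\begin{itemize}
\item If $t$ and $n$ are non-adjacent in $G'$, add the edge $tn$.
\item Otherwise, use an exchange argument. We have $\deg_{G'}(t)=a_t-1\geq \deg_{G'}(n)=a_n-1$, and $t$ is adjacent to $n$ while having degree at most $n-2$. Counting neighbourhoods then gives vertices $x,y$ with $xy\in E(G')$, $x\not\sim t$ and $y\not\sim n$. Replacing $xy$ by the edges $xt$ and $yn$ raises the degrees of $t$ and $n$ by one and leaves all others fixed, realizing $a$.
\end{itemize}
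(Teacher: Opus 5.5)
The paper does not prove this lemma; it cites it from the literature (Tripathi--Vijay, Choudum). So the comparison is with the standard Choudum argument you are following. Your necessity argument, the choice of $t$, and the exchange step are fine. However, your verification that $a'$ still satisfies the inequalities rests on a false claim. When $s<t$, the index $t$ lies among the indices $i>s$, and $\min\{s,a_t-1\}=\min\{s,a_t\}-1$ whenever $a_t=a_1\le s$. Together with the term at $n$, the right side can then drop by $2$ while the left side is unchanged. So your assertion that an inequality can fail only if it was tight is wrong.

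Concretely, take $a=(2,2,2,1)$. It satisfies the inequalities for $t=1,2,3$ ($2\le 3$, $4\le 5$, $6\le 7$). Your rule gives $t=3$ and $a'=(2,2,1,0)$, and at $s=2$ one gets $4>2+1+0=3$, even though the inequality for $a$ at $s=2$ had slack $1$. This sequence has odd sum, and that is the point. Your verification never uses the parity hypothesis, so as written it would show that the inequalities are inherited regardless of parity, which is false.

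The missing case is $s<t$ with $a_1\le s$. If $a_1\le s-1$, then $\sum_{i\le s}a'_i=sa_1\le s(s-1)$ and there is nothing to prove. If $a_1=s$, every entry is at most $s$, so (using $a_{s+1}=s$) the slack of $a$ at $s$ equals $\sum_{i\ge s+2}a_i$. You need this slack to be at least $2$. Since $n\ge t+1\ge s+2$ and all entries are positive, this fails only if $n=s+2$ and $a_n=1$. In that case $a$ consists of $s+1$ copies of $s$ followed by a single $1$, whose sum $s(s+1)+1$ is odd. This is exactly where parity enters Choudum's argument.

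For $a_1\ge s+1$ your comparison of the inequalities at $s$ and $s+1$ does work. Note that it shows the inequality at $s+1$ would be \emph{violated}, not that there is ``too much slack''.

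Finally, with inequalities only for $t\le n-1$ the statement as written is false for $n=1$ (take $a=(2)$). Your step 3 silently needs a second positive entry. That entry, like the bound $a_1\le n-1$ used in your exchange step, follows from the $t=1$ inequality only when $n\ge 2$.
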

We will use an alternative condition to certify whether a vector is graphical. 
\begin{propositionbox}{}{CertificateOfGraphicallity}
    % \label{prop:CertificateOfGraphicallity}
    Given $L \in \N$, let $a_1 \geq a_2 \geq \dots \geq a_{4L}$ be a sequence of integers satisfying 
    \begin{equation*}
        L \leq a_i \leq 2L, \quad \forall i \in \{1, \dots, 4L\}.
    \end{equation*}
    Then, if $\sum_{i=1}^{4L} a_i$ is even, $a = (a_1, \dots, a_{4L})$ is graphical. 
\end{propositionbox}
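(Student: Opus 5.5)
We verify the Erdős–Gallai condition of \Cref{lem:ErdosGallai} directly. The parity hypothesis is assumed, and all $a_i \geq L \geq 1$ are positive, so it remains to show that for every $t \in \{1,\dots,4L-1\}$,
\begin{equation*}
    \sum_{i=1}^t a_i \leq t(t-1) + \sum_{i=t+1}^{4L} \min\{t, a_i\}.
\end{equation*}
Since $a_i \leq 2L$, the left side is at most $2Lt$. Since $a_i \geq L$, each term on the right satisfies $\min\{t,a_i\} \geq \min\{t, L\}$. So it suffices to prove
\begin{equation*}
    2Lt \leq t(t-1) + (4L - t)\min\{t, L\},
\end{equation*}
and we split into two cases according to the size of $t$.

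\emph{Case $t \leq L$.} Here $\min\{t,L\} = t$, and the right side becomes $t(t-1) + (4L-t)t = t(4L-1)$. This is at least $2Lt$ because $4L - 1 \geq 2L$ for $L \geq 1$.

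\emph{Case $L < t \leq 4L-1$.} Here $\min\{t,L\} = L$, and we need
\begin{equation*}
    g(t) := t(t-1) + L(4L-t) - 2Lt = t^2 - (1+3L)t + 4L^2 \geq 0.
\end{equation*}
The discriminant of $g$ is $(1+3L)^2 - 16L^2 = (1+7L)(1-L)$. For $L \geq 2$ it is negative, so $g(t) > 0$ for all real $t$. For $L = 1$ we have $g(t) = t^2 - 4t + 4 = (t-2)^2 \geq 0$. In either case the inequality holds.

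The only potential obstacle is that this second case looks as though it might fail at intermediate $t$. The discriminant computation shows it never does. Combining both cases with the parity assumption, \Cref{lem:ErdosGallai} gives that $a$ is graphical.
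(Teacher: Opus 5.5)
Your proof is correct and follows essentially the same route as the paper's. Both verify the Erd\H{o}s--Gallai inequality by bounding the left side by $2Lt$ and splitting into $t \le L$, where every $\min\{t,a_i\}=t$, and $t > L$, where $\min\{t,a_i\}\ge L$. The only difference is cosmetic: in the second case the paper uses $3L+1 \le 4L$ to get $t^2-(3L+1)t+4L^2 \ge (t-2L)^2 \ge 0$, whereas you use a discriminant computation.
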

\begin{proof}
    By \Cref{lem:ErdosGallai}, we only need to show that 
    \begin{equation}
        \label{eq:ErdosGal_Expression}
        t(t-1) + \sum_{i=t+1}^{4L} \min\{t, a_i\} - \sum_{i=1}^{t} a_i \geq 0, \quad \forall t \in \{1, \dots, 4L\}.
    \end{equation}

    We can break this into two cases, when $t \leq L$ and when $t > L$. 

    For $t \leq L$, we have $\min\{t, a_i\} = t$. Then \eqref{eq:ErdosGal_Expression} can be bounded by 
    \begin{align*}
        t(t-1) + \sum_{i=t+1}^{4L} t - \sum_{i=1}^t a_i &= t(t-1) + t(4L - t) - \sum_{i=1}^{t}a_i\\
        &\geq t(t-1) + t(4L - t) - t2 L\\
        &= (2L-1) t
    \end{align*}

    For $t > L$, we have $\min\{t, a_i\} \geq  L$, so that 
    \begin{align*}
        t(t-1) + \sum_{i=t+1}^{4L} \min\{t, a_i\} - \sum_{i=1}^t a_i &\geq t(t-1) + L (4L - t) - t 2L\\
        &= t^2 - t(3L + 1) + 4L^2\\
        &\geq t^2 - t(4L) + 4L^2\\
        &= (t- 2L)^2.
    \end{align*}
\end{proof}

We are now prepared to prove our lower bound. 

\begin{theorembox}{}{lowerBound_kn}
    % \label{thm:lowerBound_kn}
    For all $n$ sufficiently large we have 
    \begin{equation}
        \label{eq:graphicalMoments_LowerBound}
        k_n^{\rm{Graph}} \geq \frac{1}{2} \sqrt{\frac{n}{\log_2(n+1)}} - 1
    \end{equation}
\end{theorembox}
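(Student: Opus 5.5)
We plan a pigeonhole argument that produces two distinct graphical degree-count vectors with the same first $k$ moments, using \Cref{prop:CertificateOfGraphicallity} to certify graphicality. Set $L := \lfloor n/4 \rfloor$; any leftover vertices (at most three) are kept isolated in both graphs, so they add the same amount to every moment. We consider degree-count vectors supported on the degrees $\{L, L+1, \dots, 2L\}$ with total count $4L$. By \Cref{prop:CertificateOfGraphicallity}, each such vector with even degree sum is graphical. There are $\binom{4L + L}{L} \geq 2^{cL}$ such multisets for a constant $c>0$ (we expect $\binom{5L}{L}\ge 5^L$ to suffice). Restricting to even degree sum loses at most a factor of $2$.

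Next we bound how many distinct moment vectors these sequences can produce. For $0 \le r \le k-1$, the moment $f_r(d)=\sum_i i^r d_i$ is a nonnegative integer at most $4L\cdot(2L)^{r}\le n^{r+1}$. Hence the number of possible moment vectors $(f_0,\dots,f_{k-1})$ is at most $\prod_{r=0}^{k-1}(n^{r+1}+1)\le (n+1)^{k(k+1)/2}$. If this count is smaller than the number of even-sum sequences, two distinct graphical vectors share the first $k$ moments, and therefore $k_n^{\rm Graph}>k$. The condition becomes $\tfrac{k(k+1)}{2}\log_2(n+1) < cL-1\approx cn/4$. This holds whenever $(k+1)^2 \le \frac{c n}{2\log_2(n+1)}$ up to lower-order terms, so any $k \le \tfrac12\sqrt{n/\log_2(n+1)} - 1$ works for large $n$, provided $c/2 \geq 1/4$, i.e. $c\geq 1/2$. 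We get $c = \log_2 5 > 1/2$ with room to spare, absorbing the floor in $L$ and the factor-of-two parity loss.

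A cleaner bound uses the fact that, with the support shifted to $\{0,\dots,L\}$, the differences of moments are determined by shifted moments via the binomial theorem. This refinement is not needed.

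The main obstacle is bookkeeping rather than ideas. We have to make sure the constants line up with the exact form $\tfrac12\sqrt{n/\log_2(n+1)}-1$. We must also check that padding with isolated vertices keeps the sequences graphical: a disjoint union with isolated vertices is fine, and the zero-degree count is equal in both vectors. Finally, we must confirm that the parity restriction and the constraint $L\le a_i\le 2L$ of \Cref{prop:CertificateOfGraphicallity} are respected.
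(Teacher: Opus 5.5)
Your argument is correct. It is a more direct execution of the same pigeonhole-plus-Erdős–Gallai idea the paper uses.

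\textbf{The paper's route.} The paper pigeonholes over the $2^{L-1}$ polynomials $1+\sum_{i=1}^{L-1}a_ix^i$ with $a_i\in\{0,1\}$, against at most $L^{k(k+1)/2}$ signatures $(P(1),P'(1),\dots,P^{(k-1)}(1))$. A collision gives a nonzero kernel vector $u$ with entries in $\{-1,0,1\}$. The paper then shifts $u$ into degrees $[L,2L-1]$ and splits it as $\tilde u^+-\tilde u^-$. It pads both parts with a common vector $b$ on degrees $2L-1,2L$, which restores the vertex count and fixes the parity of $f_1$. Only then does it invoke the $[L,2L]$ graphicality certificate.

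\textbf{Your route.} You pigeonhole directly over even-sum vectors supported on $[L,2L]$, which are graphical by that certificate. The collision is therefore already the required pair, so the shift/split/pad step disappears. So does the detour through derivatives at $1$ and the kernel characterization, since you count ordinary moments directly. Your candidate set is also larger ($\geq 5^L/2$ versus $2^{L-1}$), which only leaves more slack in the constants. With $k=\lfloor \tfrac12\sqrt{n/\log_2(n+1)}\rfloor-1$ you get $k_n^{\mathrm{Graph}}\geq\lfloor \tfrac12\sqrt{n/\log_2(n+1)}\rfloor$, which is at least the stated bound. What the paper's construction buys is that the two sequences differ by a sparse $\{-1,0,1\}$ vector, expressed in the same polynomial language as the upper bound, and that parity is fixed deterministically rather than by counting.

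\textbf{The one unproved step.} You assert without justification that restricting to even degree sum costs at most a factor of $2$. The claim is true. The number of even-sum multisets of size $4L$ from $\{L,\dots,2L\}$, minus the number of odd-sum ones, is the coefficient of $t^{4L}$ in $\prod_{j=L}^{2L}(1-(-1)^jt)^{-1}$. This product equals $(1-t)^{-1}(1-t^2)^{-L/2}$ for $L$ even and $(1-t^2)^{-(L+1)/2}$ for $L$ odd, and both have nonnegative coefficients.

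You can also sidestep the counting in either of two ways.
\begin{itemize}
\item Require every degree to appear an even number of times. This gives $\binom{3L}{L}\geq 3^L$ candidates, still far more than needed.
\item Reserve one vertex of degree $L$ or $L+1$ to fix parity, as the paper does with $b$. Two colliding vectors share $f_1$ once $k\geq 2$, so they receive the same correction.
\end{itemize}
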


\vspace{-1em}

\begin{proof}
    We show this result by finding two degree sequences, $d^{(1)}, d^{(2)}$ with matching first $\left\lfloor \frac{1}{2} \sqrt{\frac{n}{\log_2(n+1)}} \right\rfloor$ moments. 

    Without loss of generality, we assume that $n \mod 4 = 0$. To see this, note that we could take $\dot n = n - (n \mod 4)$. Since 
    \begin{align*}
        \frac{1}{2} \sqrt{\frac{\dot n}{\log_2(\dot n + 1)}} &\geq \frac{1}{2} \sqrt{\frac{n-3}{\log_2(n-2)}},
    \end{align*}
    and 
    \begin{equation*}
        \abs{\sqrt{\frac{n}{\log_2(n+1)}} -\sqrt{\frac{n-3}{\log_2(n-2)}}  } \to 0, \quad \text{ as } n \to \infty,
    \end{equation*}
    the difference between the two expressions can be made arbitrarily small, which can be accounted for by the $-1$ in \eqref{eq:graphicalMoments_LowerBound}. Then we could append isolated vertices to the graphs corresponding to $d^{(1)}, d^{(2)}$ until we had $n$ vertices. 

    In a similar manner, for notational convenience we assume that 
    \begin{equation*}
        \frac{1}{2} \sqrt{\frac{n}{\log_2(n+1)}}
    \end{equation*}
    is an integer in the following arguments, as the closest lower integer will suffice instead. Take $k := \frac{1}{2}\sqrt{\frac{n}{\log_2(n+1)}}$.

    Define $L := \frac{n}{4}$, and 
    \begin{equation*}
        A_L := \left\{ P(x) = 1 + \sum_{i=1}^{L-1} a_i x^i: a_i \in \{0,1\} \ \ \forall i   \right\}, \quad \abs{A_L} = 2^{L-1}.
    \end{equation*}
    For $P \in A_L$, let $S_k(P) = \left(P(1), P^{(1)}(1), \dots, P^{(k-1)}(1)\right)$ be the signature of $P$. There are at most $L^{\frac{k(k+1)}{2}}$ unique signatures for polynomials in $A_L$. Then we have 
    \begin{align*}
        L^{\frac{k(k+1)}{2}} &\leq L^{k^2}\\
        &= 2^{\frac{n}{4} \frac{\log_2(n/4)}{\log_2(n+1)}}\\
        &< 2^{\frac{n}{4} - 1}\\
        &= 2^{L -1}. 
    \end{align*}

    Therefore, there are two polynomials $P_1, P_2 \in A_L$, $P_1 \neq P_2$, with $S_k(P_1) = S_k(P_2)$. Take 
    \begin{equation*}
        Q(x) := P_1(x) - P_2(x) = \sum_{i=1}^{L-1} u_i x^i.
    \end{equation*}
    We have $Q(x) \neq 0$, and $Q^{(r)}(1) = 0$ for all $r \in \{0, \dots, k-1\}$, so that $(x-1)^k|Q(x)$. From \Cref{prop:KernelEquivToRootsOfUnity}, we have 
    \begin{equation*}
        \sum_{i=0}^{L-1} i^r u_i = 0, \quad \forall r \in \{0, \dots, k-1\}.
    \end{equation*}
    Since $f_0(u) = 0$, $\exists s \in \N$ such that $\|u\|_1 = 2 s \leq L$, which implies that 
    \begin{equation}
        s \leq \frac{L}{2} \leq \frac{n}{8}.
    \end{equation}
    Take $\tilde u \in \Z^n$ to be $u$ right shifted by $L$ terms, so that 
    \begin{align*}
        \tilde u_{L+i} := u_i, \quad \forall i \in \{0, \dots, L-1\},
    \end{align*}
    and $\tilde u_i = 0$ otherwise. Notice that $\tilde u$ has $f_r(\tilde u) = 0$ for every $r \in \{0, \dots, k-1\}$. 

    Decompose $\tilde u = \tilde u^+ - \tilde u^-$, where 
    \begin{equation*}
        \tilde u^+_i = \max\{\tilde u_i, 0\}, \quad \tilde u_i^- = \max\{- \tilde u_i, 0\}.
    \end{equation*}
    We have 
    \begin{equation*}
        \oneNorm{\tilde u^+} = \oneNorm{\tilde u^-} = s \leq \frac{n}{8}.
    \end{equation*}
    Let $e= f_1(\tilde u^+) \mod 2$, and $b \in \Z^n_{\geq 0}$ to be a common vector, with  
    \begin{align*}
        b_{2L} &:= n - s - e\\
        b_{2L-1} &:= e\\
        b_i &:= 0, \quad \forall i \in  \{0, \dots, n-1\} \setminus\{2L, 2L-1\}.
    \end{align*}
    
     Take $d^{(1)} := \tilde u^+ + b$, and $d^{(2)} := \tilde u^- + b$. Then $d^{(1)} - d^{(2)} = \tilde u$, so that the first $k$ moments match. Additionally, $\oneNorm{d^{(1)}} = \oneNorm{d^{(2)}} = n$, and $f_1(d^{(1)}),f_1(d^{(2)})$ are both even, because 
     \begin{align*}
         f_1(d^{(1)}) \mod 2 &= f_1(\tilde u^+ + b) \mod 2\\
         &= \left[f_1(\tilde u^+) + f_1(b)\right] \mod 2\\
         &= \left[f_1(\tilde u^+) + e (2L - 1)\right] \mod 2\\
         &= 0
     \end{align*}
     by design. 

     Furthermore, $d^{(1)}$, $d^{(2)}$ have only non-zero entries between $[L, 2L]$. Taking $a^{(1)}$, $a^{(2)}$ to be the corresponding ordered degree sequences, they satisfy the conditions of \Cref{prop:CertificateOfGraphicallity}. 
    
\end{proof}

Combining \Cref{thm:upperBound_kn} and  \Cref{thm:lowerBound_kn} we have 
\begin{equation*}
    \frac{1}{2}\sqrt{\frac{n}{\log_2(n+1)}} - 1 \leq k_n^{\rm{Graph}} \leq k_n^{\mc H} \leq 13 \sqrt{(n-1) \log(2n)} + 5,
\end{equation*}
which proves \Cref{thm:GraphicalKnTheta}
% \begin{equation}
%     \label{eq:graphical_kn_Theta}
%     k_n^{\rm{Graph}} = \tilde \Theta\left(\sqrt{n}\right).
% \end{equation}

%%%%%%%%%%%%%%%%%%%%%%%%%%%%%%%
\subsection{Degree Sequence from Degree Moments}
\label{subsec:EdgeMoment_Algorithm}

Here we give an efficient algorithm for recovering the degree sequence of a graph from true graph moments. As we showed in \Cref{thm:GraphicalKnTheta}, the vector $d \in \mc H_n$ is uniquely identified by its moments $f_0(d), \dots, f_{k_n^{\mc H}-1}(d)$, meaning that one should be able to recover the degree sequence from these moments. The goal of this section is to describe an algorithm that can efficiently solve for the degree sequence using moments. 

Unfortunately, this does not imply an efficient algorithm for recovering the exact degree sequence from traces, because we still need to distill the true moments from noisy observations (degree sequence traces). 
From \Cref{thm:TraceComplexity_EdgeMomentEstimation} and \Cref{thm:GraphicalKnTheta}, using  
\begin{equation*}
    m = \Exp{\tilde O(\sqrt{n})} \log\frac{1}{\delta},
\end{equation*}
traces, we can recover the first $k = O(\sqrt{n \log n})$ moments exactly, with runtime $O(m) = \Exp{\tilde{O}(\sqrt n)}$. This turns out to be the bottleneck of our  algorithm. 

While the sample complexity resulting from \Cref{sec:DegSeqRecon_BinomMixtures} is better than the moment based algorithm, we note that the result is purely information-theoretic. By this, we mean that the recovery algorithm is left implicit, and it is not known if the degree sequence can be recovered in sub-exponential time. Therefore, we find it useful to describe an efficient algorithm for recovering the degree sequence from moments. 

As a corollary of \Cref{lem:PolyBound_ExternalResult}, we can construct a polynomial which is 1 when evaluated at 0, but is very small when evaluated at other integers within a range.

Let $\R_k[x]$ denote the set of  polynomials with  real coefficients and degree at most $k$. 
\begin{corollarybox}{}{NicePolynomialsExist}
    % \label{cor:NicePolynomialsExist}
    For $k \geq 13 \sqrt{(n-1) \log 2n} + 5$, there exists a polynomial $Q(x) \in \mb{R}_{k-1}[x]$, satisfying
    \begin{equation*}
        \abs{Q(0)} = 1 > 2n \max_{1 \leq j \leq n-1} \abs{Q(j)}.
    \end{equation*}
\end{corollarybox}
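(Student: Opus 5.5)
We derive this from \Cref{lem:PolyBound_ExternalResult} using the upper bound on $\mu_\infty$, with $N = n-1$ and $\alpha = \frac{1}{2n} \in (0, 1/2]$.

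By definition, $\mu_\infty(n-1, \frac{1}{2n})$ is the smallest degree for which some polynomial $Q$ with complex coefficients satisfies
\begin{equation*}
    \abs{Q(0)} > 2n \max_{1 \leq j \leq n-1} \abs{Q(j)}.
\end{equation*}
\Cref{lem:PolyBound_ExternalResult} gives
\begin{equation*}
    \mu_\infty \leq 13\min\left\{\sqrt{(n-1)\log(2n)},\, n-1\right\} + 4 \leq 13\sqrt{(n-1)\log (2n)} + 4 \leq k-1.
\end{equation*}
So a polynomial $Q_0 \in \mb C_{k-1}[x]$ satisfying \eqref{eq:MuRequirements} exists. A polynomial of degree at most $\mu_\infty$ trivially has degree at most $k-1$, and if the minimality convention means exactly degree $\mu_\infty$, that still embeds in $\mb C_{k-1}[x]$.

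Two adjustments remain: making $Q$ real and normalizing $Q(0)=1$.

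\textbf{Normalization.} The strict inequality forces $Q_0(0) \neq 0$. Replace $Q_0$ by $Q_1 := Q_0/Q_0(0)$. Dividing by a nonzero constant preserves the inequality, so $Q_1(0) = 1$ and $\abs{Q_1(j)} < \frac{1}{2n}$ for $1 \le j \le n-1$.

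\textbf{Realness.} Take $Q(x) := \mathrm{Re}\, Q_1(x)$, meaning the polynomial whose coefficients are the real parts of those of $Q_1$. Equivalently, $Q = (Q_1 + \overline{Q_1})/2$, where $\overline{Q_1}$ conjugates the coefficients. At real points $x$ we have $\overline{Q_1}(x) = \overline{Q_1(x)}$, so $Q(x) = \mathrm{Re}(Q_1(x))$. This gives:
\begin{itemize}
    \item $Q \in \R_{k-1}[x]$;
    \item $Q(0) = \mathrm{Re}(1) = 1$;
    \item $\abs{Q(j)} \le \abs{Q_1(j)} < \frac{1}{2n}$ for each integer $1 \le j \le n-1$.
\end{itemize}
Hence $\abs{Q(0)} = 1 > 2n \max_{j} \abs{Q(j)}$, as required.

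The main point needing care is the quantifier in the definition of $\mu_\infty$: we need existence of a good polynomial of degree at most $k-1$, not merely that the minimal degree is small. This follows directly, since the minimizer has degree $\mu_\infty \le k-1$. The real-part trick then handles the fact that \Cref{lem:PolyBound_ExternalResult} is stated over $\mb C$. No further calculation is needed beyond checking $\alpha = \frac{1}{2n} \le \frac12$, which holds for $n \ge 1$.
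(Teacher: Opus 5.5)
Your proposal is correct and follows the paper's own proof: apply the upper bound on $\mu_\infty(n-1,\frac{1}{2n})$ from \Cref{lem:PolyBound_ExternalResult} to get a complex polynomial of degree at most $k-1$, normalize it so its value at $0$ is $1$, and take the real parts of its coefficients. You also spell out details the paper leaves implicit, namely that the strict inequality forces $Q_0(0)\neq 0$ and that $\abs{\mathrm{Re}\,Q_1(j)}\le\abs{Q_1(j)}$ at the real points $j$.
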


\begin{proof}
    Take $\dot Q(x)$ a polynomial that achieves the upper bound on $\mu_\infty$  in \Cref{lem:PolyBound_ExternalResult}, normalized so that $\dot Q(0) = 1$, and take the coefficients of $Q$ to be the real parts of the coefficients of $\dot Q$. 
\end{proof}

\Cref{alg:RecoverDegreeSeqFromMoments} implements our method of recovering the degree sequence using the moments. The runtime of the algorithm is $\poly(nk)$. 

\begin{algorithm}
    \caption{Moment2Degree($n,[f_0, \dots, f_{k-1}]$)}
    \label{alg:RecoverDegreeSeqFromMoments}
    \begin{algorithmic}
        \STATE $Q(x) \gets \argmin\limits_{\substack{A(x) \in \mb{R}_{k-1}[x]\\ A(0) = 1}} \max_{1 \leq j \leq n-1} \abs{A(j)}$.
        \STATE $d \gets [0, \dots, 0]$.
        \STATE $M_r \gets f_r$, $\forall r \in \{0, \dots, k-1\}$.
        \FOR{$s \in \{0, \dots, n-1\}$}
            \STATE Expand $P(x) = Q(x-s) = \sum_{r=0}^{k-1} p_{s,r} x^r$
            \STATE $Z_s \gets \sum_{r=0}^{k-1} p_{s,r} M_r$
            \STATE $d[s] \gets \lceil Z_s \rfloor$. 
            \FOR{$r \in \{0, \dots, k-1\}$}
                \STATE $M_r \gets M_r - d[s]s^r$ 
            \ENDFOR
        \ENDFOR
        \RETURN $d$
    \end{algorithmic}
\end{algorithm}

\begin{theorembox}{}{RecDegSeq_FromMoments}
    % \label{thm:RecDegSeq_FromMoments}
    Let $f_0, \dots, f_{k-1}$ be the true degree moments of $G$, for $k \geq 13 \sqrt{(n-1) \log 2n} + 5$. Then \Cref{alg:RecoverDegreeSeqFromMoments} recovers $d_G$.
\end{theorembox}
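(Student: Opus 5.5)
We argue by induction on $s$, with the invariant that at the start of iteration $s$ the entries $d[0],\dots,d[s-1]$ already equal $d_{G,0},\dots,d_{G,s-1}$. Under this invariant the residual moments satisfy
\begin{equation*}
    M_r = \sum_{i=s}^{n-1} i^r d_{G,i}, \qquad r \in \{0,\dots,k-1\}.
\end{equation*}
This holds initially because $M_r = f_r(d_G)$. Each update $M_r \gets M_r - d[s]s^r$ subtracts exactly the contribution of index $s$ once $d[s] = d_{G,s}$ has been confirmed.

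The key step is to show that $Z_s$ approximates $d_{G,s}$ to within $1/2$. Since $P(x) = Q(x-s)$ has degree at most $k-1$, linearity gives
\begin{equation*}
    Z_s = \sum_{r=0}^{k-1} p_{s,r} M_r = \sum_{i=s}^{n-1} d_{G,i}\, P(i) = \sum_{i=s}^{n-1} d_{G,i}\, Q(i-s) = d_{G,s} Q(0) + \sum_{j=1}^{n-1-s} d_{G,s+j}\, Q(j).
\end{equation*}
By \Cref{cor:NicePolynomialsExist}, since $k \geq 13\sqrt{(n-1)\log 2n}+5$, there is a polynomial in $\R_{k-1}[x]$ with $A(0)=1$ and $\max_{1\le j\le n-1}\abs{A(j)} < 1/(2n)$. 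The minimizer $Q$ chosen in the first line of the algorithm therefore satisfies $Q(0)=1$ and $\abs{Q(j)} < 1/(2n)$ for all $1 \le j \le n-1$. Note that $j$ ranges only up to $n-1-s \le n-1$, so the bound applies to every term. Hence
\begin{equation*}
    \abs{Z_s - d_{G,s}} < \frac{1}{2n}\sum_{i>s} d_{G,i} \le \frac{1}{2n}\cdot n = \frac12,
\end{equation*}
where we used $\sum_i d_{G,i} = n$. Rounding then gives $d[s] = d_{G,s}$, which closes the induction. After the final iteration the output is $d_G$. The argument in fact works for any $d\in\mc H_n$.

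The only delicate point is the minimization defining $Q$. It is a finite-dimensional convex problem: minimize an $\ell_\infty$ norm of linear forms over an affine subspace. A minimizer exists by compactness of the relevant sublevel set; the sublevel set is bounded because, for $k \le n-1$, a polynomial of degree less than $k$ is determined by its values at the $k-1$ points $1,\dots,k-1$ together with $Q(0)=1$. When $k > n-1$, the constraints can be met with $Q(j)=0$ for all $j$ by Lagrange interpolation, and any minimizer has value $0 < 1/(2n)$, so the same argument goes through. All that is needed is that the minimal value is strictly below $1/(2n)$, and this is exactly what \Cref{cor:NicePolynomialsExist} supplies.

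I expect the main obstacle to be the strictness of the inequality. The bound $1/(2n)\cdot\sum_{i>s} d_{G,i}$ can reach $1/2$ only if all $n$ vertices have degree greater than $s$, which is impossible when $d_{G,s}\ge1$. The case $d_{G,s}=0$ is covered by the strict inequality inherited from the corollary, which keeps $Z_s$ strictly below $1/2$, so the rounding is never ambiguous.
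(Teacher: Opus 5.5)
Your proof is correct and follows the paper's argument: induction on $s$ with the residual moments $M_r=\sum_{i\ge s} i^r d_{G,i}$ and the shifted polynomial $P(x)=Q(x-s)$, so that $Z_s-d_{G,s}=\sum_{i>s} d_{G,i}\,Q(i-s)$ is at most $n\max_{1\le j\le n-1}|Q(j)|<1/2$ in absolute value by the corollary (your remarks on attainment of the minimizer go slightly beyond the paper, which only observes that the minimization is a linear program). One small fix: the first inequality in your display should be $\le \max_{1\le j\le n-1}|Q(j)|\sum_{i>s}d_{G,i}$ rather than a strict $<$, because both sides vanish when no vertex has degree exceeding $s$; the strictness needed for rounding comes only from the final comparison with $1/2$, which also makes your last paragraph unnecessary.
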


\begin{proof}
    From \Cref{cor:NicePolynomialsExist}, solving the optimization problem 
    \begin{equation}
        \label{eq:polynomialOptimization}
        Q(x) = \argmin_{\substack{A(x) \in \mb{R}_{k-1}[x] \\ A(0) = 1}} \max_{1 \leq j \leq n-1} \abs{A(j)},
    \end{equation}

    yields a polynomial with 
    \begin{equation}
        \label{eq:IdealPolyBound}
        \frac{1}{2n} > \max_{1 \leq j \leq n-1} \abs{Q(j)}.
    \end{equation}

    Notice that 
    \begin{equation*}
        \tau = \min_{\substack{A(x) \in \mb R_{k-1}[x]\\ A(0) = 1}} \max_{1 \leq j \leq n-1} \abs{A(j)},
    \end{equation*}
    is equivalent to solving: 
    \begin{align*}
        \rm{minimize }\  &\tau\\
        \text{subject to: }\  &a_0 = 1\\
        &- \tau \leq \sum_{\ell = 0}^{k-1} j^\ell a_\ell \leq \tau, \quad \forall j \in \{1, \dots, n-1\},
    \end{align*}
    where $A(x) = \sum_{\ell = 0}^{k-1} a_\ell x^\ell$. Therefore, \eqref{eq:polynomialOptimization} can be formulated as a linear program, which can be solved in $\poly(n)$ time. 

    Let $q_0, \dots, q_{k-1}$ be the coefficients of $Q$, so that 
    \begin{equation*}
        Q(x) = \sum_{r=0}^{k-1} q_r x^r.
    \end{equation*}
    Let $f_0, \dots, f_{k-1}$ be the graph moments of $G$, so that 
    \begin{equation}
        f_\ell := f_\ell(d_G) = \sum_{i=0}^{n-1} d_i i^\ell. 
    \end{equation}
    Define $Z_0 := \sum_{r=0}^{k-1} q_r f_r$, then 
    \begin{align*}
        Z_0 &= \sum_{r=0}^{k-1} q_r\sum_{i=0}^{n-1} i^r d_i  = \sum_{i=0}^{n-1} d_i Q(i) = d_0 + \sum_{i=1}^{n-1} d_i Q(i).
    \end{align*}
    Therefore, 
    \begin{align*}
        \abs{Z_0 - d_0} &\leq \sum_{i=1}^{n-1} d_i\abs{Q(i)}\\
        &\leq \max_{1\leq j \leq n-1} \abs{Q(j)} \sum_{i=1}^{n-1} d_i < \frac{1}{2}.
    \end{align*}
    In the above result, we used the fact that $\sum_{i=0}^{n-1} d_i = n$ along with \eqref{eq:IdealPolyBound}. Then we have $\lceil Z_0\rfloor = d_0$.
    
    Now, assume for induction that we know $d_0, \dots, d_{s-1}$, and we want to estimate $d_s$. For every $r \in \{0, \dots, k-1\}$, define 
    \begin{equation*}
        M_r^{(s)} := f_r - \sum_{i=0}^{s-1} i^r d_i = \sum_{i=s}^{n-1} i^r d_i.
    \end{equation*}
    Additionally, let $P_s(x) = Q(x-s)$, so that $P_s(s) = 1$ and $\forall i > s$ we have 
    \begin{equation*}
        \abs{P_s(i) } < \frac{1}{2n}.
    \end{equation*}
    Let $p_{s,0}, \dots, p_{s,k-1}$ to be the coefficients of $P_s$, so that 
    \begin{equation*}
        P_s(x) = \sum_{r=0}^{k-1} p_{s,r} x^r.
    \end{equation*}
    Specifically, 
    \begin{equation*}
        p_{s,r} = \sum_{t=r}^{k-1} q_t \binom{t}{r} (-s)^{t-r}.
    \end{equation*}
    Our estimate $Z_s$ will be defined similar to $Z_0$, i.e. 
    \begin{align*}
        Z_s &:= \sum_{r=0}^{k-1} p_{s,r} M_r^{(s)} 
        = \sum_{i=s}^{n-1} d_i \sum_{r=0}^{k-1} p_{s,r} i^r\\
        &= \sum_{i=s}^{n-1} d_i P_s(i) 
        = d_s + \sum_{i=s+1}^{n-1} d_i P_s(i).
    \end{align*}
    Then 
    \begin{align*}
        \abs{Z_s - d_s} &\leq \sum_{i=s+1}^{n-1} d_i \abs{P_s(i)}\\
        &< \frac{1}{2},
    \end{align*}
    so that $\lceil Z_s \rfloor = d_s$. 
\end{proof}

The main algorithm for degree reconstruction from traces is described in \Cref{alg:DegRecon}, and the main result of this section is the following theorem. 
\begin{theorembox}{}{RecDegSeq_FromTraces}
    % \label{thm:RecDegSeq_FromTraces}
    Taking 
    \begin{align*}
        k &= 13 \sqrt{(n-1) \log(2n)} + 5\\
        m &= \log( k/\delta)\Exp{(2k + 2) \left[\log\frac{2n}{q}\right]}\\
        &=\Exp{\tilde O (\sqrt{n})} \log\left(\frac{1}{\delta}\right),
    \end{align*}
    then \Cref{alg:DegRecon} recovers $d_G$ exactly with probability $ > 1-\delta$.
\end{theorembox}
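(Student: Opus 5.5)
The plan is to compose the two previously established pieces: \Cref{thm:TraceComplexity_EdgeMomentEstimation} for the statistical step and \Cref{thm:RecDegSeq_FromMoments} for the algebraic step. I assume \Cref{alg:DegRecon} first calls \Cref{alg:MomentEst} on the $m$ traces with parameter $k$ to produce $\hat{\mathbf f}=(\hat f_0,\dots,\hat f_{k-1})$. It then passes these to \Cref{alg:RecoverDegreeSeqFromMoments} with input $n$.

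\emph{Step 1 (moment recovery).} With $k = 13\sqrt{(n-1)\log(2n)}+5$, note that $k = o(n)$, so the hypothesis $k_n=o(n)$ of \Cref{thm:TraceComplexity_EdgeMomentEstimation} holds for all sufficiently large $n$. For small $n$, where $k$ may exceed $n$, I would either appeal to the fact that the union bound in that proof only needs finitely many indices, or cap $k$ at $n$; the estimation argument never actually uses $k=o(n)$ except for bookkeeping.

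I will recheck the constants. For each $\ell\le k-1$, Hoeffding with $0\le h_\ell(t)\le n^{\ell+1}$ gives failure probability $2\exp\{-q^{2\ell+2}m/(2n^{2\ell+2})\}$. Requiring this to be at most $\delta/k$ for every $\ell$ gives the condition $m \ge 2\log(2k/\delta)(n/q)^{2k}$. The stated $m=\log(k/\delta)\exp\{(2k+2)\log(2n/q)\}$ exceeds this, since $(2n/q)^{2k+2}=4^{k+1}(n/q)^{2k+2}$ easily absorbs both the factor $2$ and the difference between $\log(2k/\delta)$ and $\log(k/\delta)$ when $\delta<1/2$.

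A union bound over the $k$ indices then shows that, with probability $>1-\delta$, every rounded estimate satisfies $\hat h_\ell=h_\ell(d_G)$. Because $\mathbf T_k$ is an exact integer linear map, \eqref{eq:conversionFromNoisyFallingFactorialToTrueEdgeMoments} then yields $\hat f_r=f_r(d_G)$ for all $r<k$.

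\emph{Step 2 (decoding).} On this good event, the inputs to \Cref{alg:RecoverDegreeSeqFromMoments} are exactly the true moments. Since $k\ge 13\sqrt{(n-1)\log 2n}+5$, \Cref{thm:RecDegSeq_FromMoments} applies directly and the output equals $d_G$. The decoding step is deterministic, so the overall failure probability is at most $\delta$.

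\emph{Step 3 (asymptotics).} Finally, I would confirm $m=\exp\{\tilde O(\sqrt n)\}\log(1/\delta)$ for fixed $q$. We have $(2k+2)\log(2n/q)=O(\sqrt{n\log n}\,\log n)$, and $\log(k/\delta)\le\log k+\log(1/\delta)$; the $\log k$ term is absorbed into the exponential.

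The main obstacle is only this constant bookkeeping: matching the stated $m$ (with $\log(k/\delta)$ rather than $\log(2k/\delta)$) to the Hoeffding and union-bound requirement, and handling the $k_n=o(n)$ hypothesis for small $n$. Everything else is a direct composition of the earlier results.
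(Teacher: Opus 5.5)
Your proposal is correct and matches the paper's proof, which is just the composition of \Cref{thm:TraceComplexity_EdgeMomentEstimation} (exact moment recovery with high probability) with the deterministic decoding result \Cref{thm:RecDegSeq_FromMoments}. Your extra verification fills in details the paper leaves implicit: you check that the stated $m$, with $\log(k/\delta)$, dominates the Hoeffding-plus-union-bound requirement $2\log(2k/\delta)(n/q)^{2k}$, and you note that the $k_n=o(n)$ hypothesis is only bookkeeping.
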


\begin{proof}
    The proof follows directly from \Cref{thm:TraceComplexity_EdgeMomentEstimation} and   \Cref{thm:RecDegSeq_FromMoments}.
\end{proof}

\noindent The runtime of \Cref{alg:DegRecon} is 
\begin{equation}
    \label{eq:fullAlgRuntime}
    \poly(m) + \poly(n k) = \Exp{\tilde{O}(\sqrt{n})}.
\end{equation}

\begin{algorithm}
    \caption{ReconstructDegreeSequence(k, n, $[t^{(1)}, \dots, t^{(m)}]$)}
    \label{alg:DegRecon}
    \begin{algorithmic}[1]
    
      \STATE $\hat {\mathbf{f}} \gets \rm{EstimateMoments}(k, [t^{(1)}, \dots, t^{(m)}])$
      \STATE $\hat d_G \gets \rm{Moment2Degree}(n, \hat{\mathbf{f}})$
      \RETURN $\hat d_G$
    \end{algorithmic}
\end{algorithm}

%%%%%%%%%%%%%%%%%%%%%%%%%%%%%%%
\section{Conclusion}
\label{sec:conclusion}

We considered the problem of reconstructing the degree sequence of a graph from vertex deletion traces, and we provided two algorithms for this problem with complementary sample and runtime complexities. The first algorithm uses rejection sampling to sample from a binomial mixture distribution, where the mixture weights encode the degree sequence. By leveraging \cite[Theorem 8]{krishnamurthy2019trace}, we determine that the sample complexity of recovering the mixture parameters, and thus the degree sequence, is $\Exp{\tilde O(n^{1/3})}$. However, the only known decoding algorithm for this method involves brute force search, taking exponential time. The second algorithm makes use of a particular set of graph invariants, the degree moments, to recover the degree sequence using algebraic techniques. Utilizing a result on the multiplicity of roots at 1 for certain polynomials \cite{erdelyi2016coppersmith}, we were able to prove a tight bound up to logarithmic factors on the number of degree moments required for degree sequence reconstruction. This led to a sample complexity of $\Exp{\tilde O(n^{1/2})}$. By exploiting the same polynomial bound, we provide an algorithm to efficiently recover the degree sequence of a graph from its degree moments, leading to an overall runtime matching the sample complexity. As an introduction to the problem, we showed an upper and lower bound on the number of traces required to estimate the first moment (size) of a graph, which are within a factor of $n$ of each other. 

While the result of Sengupta and McGregor \cite{mcgregor2022graph} shows that graph trace reconstruction requires an exponential number of samples for arbitrary graphs, our result shows that for any pair of graphs with different degree sequences, a sub-exponential (yet still super-polynomial) number of samples suffices to distinguish. 

\noindent \textbf{Open Problems:} 
The most immediate open problem is to improve our trace complexity upper bound for degree-sequence reconstruction. The recent quasipolynomial result for string-trace reconstruction \cite{burudgunte2026quasipolynomial} suggests that exploiting joint information across traces is likely to improve the sample complexity. In contrast, our most effective method uses the information from a single vertex in each trace, which is seemingly wasteful. 

% Our best method reduces each accepted trace to the degree of a single uniformly random vertex, thereby ignoring correlations among vertices within the same trace. The recent quasipolynomial upper bound for binary-string trace reconstruction \cite{burudgunte2026quasipolynomial}, which overcomes barriers for local methods by exploiting global trace statistics, suggests that using joint information from many vertices in each graph trace could yield substantially better bounds.

% The most immediate extension to this work would be to improve the sample complexity upper bound for degree sequence reconstruction. We have several reasons to believe that our methods are far from the optimal trace complexity: The most direct piece of evidence is the recent result showing quasi-polynomial sample complexity for trace reconstruction \cite{burudgunte2026quasipolynomial}, which indicates that algorithms should take advantage of the global structure of the statistics to approach the optimal sample complexity. Notice that the best algorithm developed in this work uses the information from a single vertex in each trace. 

% Naturally, any improvement on the binomial mixture result of \cite[Theorem 8]{krishnamurthy2019trace} would improve on the bound described here. . Outside of this possibility, we believe that improving the the sample complexity for this problem would take new tools beyond those described in this work. Naturally, one might consider the fact that 
% We believe this will require new tools, or at min

There are several other open problems related to the graph trace reconstruction. One could also try to close the gap between the $O(n^3)$ upper and $\Omega(n^2)$ lower bounds for size reconstruction. The $\Omega(n^2)$ lower bound for size reconstruction is also the best lower bound for degree sequence reconstruction, so another direction would be to prove a non-trivial lower bound on degree sequence reconstruction. 
We have also left considerations of non-constant $q$ for future work. 
Additionally, one could consider channels beyond the vertex deletion channel, such as the edge addition and removal channel considered in \cite{McGregor2024GraphRecon}. 
Finally, it might be interesting to consider some variations of the trace reconstruction problem for unlabeled graphs, such as coded and approximate trace reconstruction. 

% The first is to understand several flavors of reconstruction, namely coded and approximate graph trace reconstruction. The coded trace reconstruction is of particular interest, as it could lead to techniques for encoding information on unlabeled graphs. Another direction to consider more general channel models, such as channels that add and remove edges independent of vertex deletions (as in \cite{McGregor2024GraphRecon}). 

\noindent \textbf{AI Disclosure}: OpenAI ChatGPT 5.6 Sol was used to assist with literature review, copy-editing, identifying the connection to \cite{krishnamurthy2019trace}, and compressing portions of \Cref{sec:introduction}, \Cref{sec:DegSeqRecon_BinomMixtures}, \Cref{sec:DegSeqRecon_EdgeMoments}, \Cref{sec:conclusion}, and \Cref{sec_app:Proof_LowBound_EdgeCount}. The authors independently re-derived and verified all mathematical claims and references and take full responsibility for the contents of the paper.

\ifAnon

\else
\section*{Acknowledgments}
The authors would like to thank Juliana Mini for contributions to the early stages of this work. 
\fi

\bibliographystyle{alphaurl}
\bibliography{refs}

\clearpage
\appendix

\section{Proof of \texorpdfstring{ Theorem \ref{thm:LowBound_EdgeCount} }{Edge Count Lower Bound} }

\label{sec_app:Proof_LowBound_EdgeCount}

First we will need an intermediate result for the Hellinger distance between pairs of binomial random variables. For $P,Q$ probability distributions with support a subset of $\{0, \dots, d\}$, the Hellinger distance between $P$ and $Q$ is given by 
\[
    H^2(P,Q) = \frac{1}{2}\sum_{i=0}^d (\sqrt{p_i} - \sqrt{q_i})^2
\]

For $q \in (0, 1)$, let $\mu_A = \rm{Bin}(A, q)$ be the binomial distribution with $A$ trials and success probability $q$. Let $P =\mu_N \otimes \mu_N$ be the joint probability distribution of $(X,Y)$, where $X \sim \mu_N$ and $Y \sim \mu_N$, and take $F = \frac{1}{2} (\mu_{N-1} \otimes \mu_{N+1} + \mu_{N+1}\otimes \mu_{N-1})$ be the joint mixture distribution of a pair of binomial random variables. 

We will use the following intermediary result about the Hellinger distance between $P$ and $F$.
\begin{propositionbox}{Bounded Hellinger Distance}{interHellDist}
    % \label{prop:inter_HellDist}
    \[
        H^2(P,F) = O\left(N^{-2}\right).
    \]
\end{propositionbox}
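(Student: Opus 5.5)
We plan to bound the Hellinger distance through the $\chi^2$ divergence, using $H^2(P,F) \leq \frac12 \chi^2(F\|P)$, where $\chi^2(F\|P) = \sum_{x,y} \frac{F(x,y)^2}{P(x,y)} - 1$. Everything then reduces to controlling the likelihood ratio $F/P$. Write $\mu_A(x)=\binom{A}{x}q^xp^{A-x}$. The ratios to $\mu_N$ are
\begin{equation*}
    r_+(x) := \frac{\mu_{N+1}(x)}{\mu_N(x)} = \frac{(N+1)p}{N+1-x}, \qquad r_-(x) := \frac{\mu_{N-1}(x)}{\mu_N(x)} = \frac{N-x}{Np}.
\end{equation*}
The first is defined for $x\leq N$; the point $x=N+1$ lies outside the support of $\mu_N$ and must be handled separately. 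Its mass is $q^{N+1}$, which is exponentially small and only contributes to the $O(N^{-2})$ error through the Hellinger sum directly. We will therefore bound the Hellinger terms on $\{0,\dots,N\}^2$ via $\chi^2$, and bound the boundary points crudely by their total $F$-mass.

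Set $z = (x-Nq)/\sqrt{Npq}$ and expand both ratios to second order in $1/\sqrt N$:
\begin{equation*}
    r_\pm(x) = 1 \pm \frac{a(x)}{\sqrt N} + \frac{b_\pm(x)}{N} + \dots, \qquad a(x)\propto z.
\end{equation*}
The key cancellation is that
\begin{equation*}
    \frac{F}{P}(x,y) = \tfrac12\left[r_-(x)r_+(y)+r_+(x)r_-(y)\right]
\end{equation*}
is symmetric under swapping the two factors. Hence the first-order terms $\pm a(x)/\sqrt N \mp a(y)/\sqrt N$ cancel exactly, so $F/P - 1 = O\!\left(\frac{(1+z_x^2)(1+z_y^2)}{N}\right)$. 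Squaring and taking expectation under $P$, where the binomial moments of $z$ are $O(1)$, gives $\chi^2 = O(N^{-2})$.

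To make this rigorous, rather than Taylor expanding we would compute $\E_P[(F/P)^2]$ exactly. It splits into a sum of products of single-variable quantities $\E_{\mu_N}[r_\epsilon(X) r_{\epsilon'}(X)]$. The quantities $\E[r_-]=1$, $\E[r_-^2]$ and $\E[r_-r_+]$ are explicit low-order moments of the binomial: $r_-$ is linear in $x$, and $r_-r_+$ is a rational function that is nearly linear. The term $\E[r_+^2]$ involves $1/(N+1-x)^2$, and we will handle it by restricting to a typical window $|x-Nq|\leq N^{2/3}$ and using Chernoff bounds for the tail. Combining, $\chi^2 = \frac12\left(\E[r_-^2]\E[r_+^2] + \E[r_-r_+]^2\right) - 1$. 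Writing $\E[r_-^2]=1+\alpha/N+O(N^{-2})$, $\E[r_+^2]=1+\beta/N+\dots$ and $\E[r_-r_+]=1+\gamma/N+\dots$, the $1/N$ coefficient is $\frac12(\alpha+\beta)-\gamma$. This vanishes because $\alpha,\beta,\gamma$ are all the same leading quantity $\E[a^2] = \frac{q}{p}$: the leading expansions give $\alpha=\beta=-\gamma$ up to sign conventions, so $\frac12(\alpha+\beta+2\gamma)$-type combinations cancel.

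The main obstacle is this bookkeeping. We must verify that the $1/N$ term cancels exactly, which is the symmetrization at work, and control the second-order remainders uniformly near the edge $x\approx N$ where $r_+$ blows up. We would attack the latter by truncating to the typical window and bounding the complement by $e^{-cN^{1/3}}$.
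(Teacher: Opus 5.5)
Your argument is correct in substance, and its skeleton is the paper's. The paper also bounds the Hellinger sum on $\{0,\dots,N\}^2$ by $\frac{1}{2}\E_P[(1-R)^2]$ (using $|1-\sqrt{R}|\le|1-R|$) and charges the off-support points their $F$-mass $q^{N+1}$. It writes the ratios as $1-U_X$ and $1+U_X+\eps_X$ with $U_X=(X-Nq)/(N(1-q))$, and relies on the symmetry of $F$ to remove the linear terms.

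The difference is in the rigorous step. The paper simply carries out your heuristic paragraph. It keeps the pointwise identity $1-R=U_XU_Y+\frac{1}{2}(U_X\eps_Y+U_Y\eps_X)-\frac{1}{2}(\eps_X+\eps_Y)$ and applies $(a+b+c)^2\le 3(a^2+b^2+c^2)$ together with independence. It then needs only $\E[U_X^2]=O(N^{-1})$ and $\E[\eps_X^2]=O(N^{-2})$. The latter comes from splitting on $|X-Nq|\le N(1-q)/2$, with $|\eps_X|=O(N)$ and Hoeffding off that event. Because it only takes expectations of squares of quantities that are already pointwise $O(1/N)$ on the typical set, no cancellation ever has to be verified.

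Your factorized formula $\frac{1}{2}(\E[r_-^2]\E[r_+^2]+\E[r_-r_+]^2)-1$ is also valid, and more explicit: it would give the leading constant. The cost is that the cancellation now happens only in expectation, so you need $\E[r_+^2]$ to precision $O(N^{-2})$. Write $r_+=(1-V)^{-1}$ with $V=(X-(N+1)q)/((N+1)p)$. You must expand $(1-V)^{-2}$ through the cubic term and use that the binomial third central moment is $O(N)$. A second-order expansion with an absolute $O(|V|^3)$ remainder only yields $O(N^{-3/2})$.

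Also fix the bookkeeping. The $1/N$ coefficient is $\frac{1}{2}(\alpha+\beta)+\gamma$, with $\alpha=\beta=q/p$ and $\gamma=-q/p$. Indeed $\E[r_-^2]=1+\frac{q}{Np}$ and $\E[r_-r_+]=1-\frac{q}{Np}+\frac{q^{N+1}}{Np}$ exactly. So the three quantities are not all equal to $q/p$, though the cancellation you want does hold.
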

\begin{proof}
    For $X \sim \mu_N$, let 
    \begin{align*}
        Z_X &:= X - \E[X]\\
        &= X - Nq\\
        U_X &:= \frac{Z_X}{N (1-q)}.
    \end{align*}
    Note that $\E[U_X] = 0$.
    For the following likelihood ratios, 
    \begin{align*}
        \frac{\mu_{N-1}(X)}{\mu_N(X)} &= \frac{N-X}{N(1-q)} = 1 - U_X\\
        \frac{\mu_{N+1}(X)}{\mu_N(X)} &= \frac{(N+1) (1-q)}{N+1-X} = 1 + U_X + \eps_X,
    \end{align*}
    where $\eps_X$ is the perturbation
    \[
        \eps_X := \frac{Z_X^2 - Z_X - N q(1-q)}{N (1-q) (N(1-q) + 1 - Z_X)}.
    \]

    For $(x,y) \in \{0, \dots, N\}^2$,  the likelihood ratio of $F$ over $P$ is
    \begin{align*}
        R(X,Y) &= \frac{F(X,Y)}{P(X,Y)}\\
        &= \frac{1}{2}  \frac{\mu_{N+1}(X) \mu_{N-1}(Y) + \mu_{N-1}(X) \mu_{N+1}(Y)}{\mu_N(X) \mu_N(Y)}\\
        % &= \frac{1}{2} [L_+(X) L_-(Y) + L_-(X) L_+(Y)]\\
        &= 1 - U_X U_Y + \frac{\eps_X + \eps_Y}{2} - \frac{U_X \eps_Y + U_Y \eps_X}{2}.
    \end{align*}

    Define the event $S := \{(X,Y) \in \left\{0, \dots, N\}^2\right\}$. We may bound the Hellinger distance using the likelihood ratio $R(X,Y)$ as follows
    \begin{align*}
        H^2(P, F) &= \frac{1}{2} \sum_{(x,y)} \left(\sqrt{P(x,y)} - \sqrt{F(x,y)}\right)^2 \\
        &= \frac{1}{2}\sum_{(x,y) \in S} \left(\sqrt{P(x,y)} - \sqrt{F(x,y)}\right)^2 + \frac{1}{2}\sum_{(x,y) \in S^C} F(x,y)\\
        &= \frac{1}{2} \sum_{(x,y) \in S} P(x,y) \left(1 - \sqrt{R(x,y)}\right)^2 + \frac{1}{2}\sum_{(x,y) \in S^C} F(x,y)\\
        &\leq \frac{1}{2} \sum_{(x,y) \in S} P(x,y) \left(1 - R(x,y)\right)^2 + \frac{1}{2}\sum_{(x,y) \in S^C} F(x,y)\\
        &= \frac{1}{2} \E_{P}[(1-R(X,Y))^2]  + \frac{1}{2} \sum_{(x,y) \in S^C}F(x,y).
    \end{align*}
    Note that $(X,Y) \in S^C$ is a large deviation event under $F$, so that 
    \begin{equation*}
        \sum_{(x,y) \in S^C} F(x,y) = q^{N+1} = \exp\{{-\Theta(N)}\}.
    \end{equation*}

    We have 
    \begin{align*}
        \E_P[(1-R)^2] &= \E_P\left[\left(U_X U_Y + \frac{U_X \eps_Y + U_Y \eps_X}{2} - \frac{\eps_X + \eps_Y}{2}\right)^2\right]\\
        &\overset{(a)}{\leq} 3 \left(\E_P\left[U_X^2 U_Y^2 \right] + \frac{1}{4}\E_P\left[(U_X\eps_Y + U_Y \eps_X)^2\right] + \frac{1}{4} \E_P\left[(\eps_X + \eps_Y)^2\right] \right)\\
        &\overset{(b)}{\leq} 3 \E_P[U_X^2 U_Y^2] + \frac{3}{2} \left(\E_P\left[U_X^2 \eps_Y^2\right] + \E_P\left[U_Y^2 \eps_X^2\right] + \E_P\left[\eps_X^2\right] + \E_P\left[\eps_Y^2\right]\right)\\
        &\overset{(c)}{=} 3 \left(\left(\E_P\left[U_X^2\right]\right)^2 + \E_P\left[U_X^2\right]\E_P\left[\eps_X^2\right] + \E_P\left[\eps_X^2\right]\right),
    \end{align*}
    where $(a)$ and $(b)$ use $(a+b+c)^2 \leq 3(a^2 + b^2+c^2)$ and $(a+b)^2 \leq 2(a^2+b^2)$ respectively, and $(c)$ uses the fact that $X$ and $Y$ are independent and identically distributed under $P$. 

    Additionally, 
    \begin{equation*}
        \E_P[U_X^2] =  (N (1-q))^{-2} \Var_P[X] = O\left(N^{-1}\right),
    \end{equation*}

    Now, we claim that $\E[\eps_X^2] = O\left(N^{-2}\right)$. To see this, take $E := \left\{\abs{Z_x} \leq \frac{N(1-q)}{2}\right\}$, and $\mathbbm{1}_E$ to be the indicator variable for $E$. We have $\E_P[\eps_X^2] = \E_P[\eps_X^2 \mathbbm 1_E] + \E_P[\eps^2_X \mathbbm{1}_{E^C}]$. 
    \begin{align*}
        \E_P[\eps_X^2 \mathbbm 1_E] &\leq 12 \cdot  \E_P\left[\frac{Z_X^4 + Z_X^2 + N^2 q^2 (1-q)^2 }{N^4(1-q)^4} \cdot \mathbbm 1_E\right]
    \end{align*}
    Note that 
    \begin{equation*}
        \E_P[Z_X^4] = O(N^2), \quad \E_P[Z_X^2] = O(N),
    \end{equation*}
    so
    \begin{align*}
        \E_P[\eps_X^2 \mathbbm 1_E] &\leq 12 \frac{\E_P[Z_X^4]}{N^4 (1-q)^4} + O\left(N^{-2}\right)\\
        &=O\left(N^{-2}\right)
    \end{align*}

    On the other hand, since $\abs{Z_X} \leq N$
    \begin{align*}
        \abs{\eps_X} &= \frac{\abs{Z_X^2 - Z_X - Nq(1-q)}}{N(1-q) (N + 1 - X)}\\
        &\leq \frac{N^2 + N(1+q)(1-q)}{N(1-q)} := M(N,q) = O(N).
    \end{align*}
    Then 
    \begin{equation*}
        \E_P[\eps_X^2 \mathbbm 1_{E^C}] \leq M^2(N,q) \Prob_P[E^C].
    \end{equation*}
    Using Hoeffding's inequality we have 
    \begin{align*}
        \Proba{E^C} &= \Proba{\abs{X - \E[X]} > \frac{N(1-q)}{2}} \\
        &\leq 2 \Exp{-N \frac{(1-q)^2}{2}}
    \end{align*}
    Leading to
    \begin{equation*}
        \E_P[\eps^2] \leq 2 M^2(N,q) \Exp{-N\frac{(1-q)^2}{2}} + \E_P[\eps^2 \mathbbm 1_E] = O\left(N^{-2}\right).
    \end{equation*}

    Therefore, 
    \begin{align*}
        H^2(P,F) \leq q^{N+1} + \frac{1}{2} \E_{P}[(1-R(X,Y))^2] = O\left(N^{-2}\right)
    \end{align*}
\end{proof}

\begin{proof}[Proof of  \Cref{thm:LowBound_EdgeCount} ]

    Let $K_{n_1,n_2}$ be the bipartite graph on $n_1+n_2$ vertices, which is maximally connected across the bi-partition. For $n$ even\footnote{For $n$ odd, we can add an isolated vertex and use the same process}, take $N = \frac{n}{2}$ and $G_1 = K_{N, N}$ and $G_2 = K_{N +1, N-1}$. 

    Let $g: \{0, \dots, N+1\}^2 \to \mc G^{\leq}_n$ map its inputs to the isomorphism class of complete bipartite graphs, so that $(x,y) \mapsto K_{x,y}$. Taking $(x,y) \sim P$, one can see that $g(x,y) \sim \Phi_p(G_1)$, and $(x,y) \sim F$, then $g(x,y) \sim \Phi_p(G_2)$. By the data processing inequality, we have 
    \begin{equation*}
        H^2(\Phi_p(G_1), \Phi_p(G_2)) \leq H^2(P,F),
    \end{equation*}
    so that \ref{prop:interHellDist} gives us $H^2(\Phi_p(G_1), \Phi_p(G_2)) = O(N^{-2})$.

    Let $\Phi_p(G_1)^m, \Phi_p(G_2)^m$ be the $m$-times independent product distributions, and use $H^2(\Phi_p(G_1), \Phi_p(G_2)) := h$ to denote the Hellinger distance between the distributions, so that $h \leq \frac{C}{N^2}$ for some constant $C > 0$. In order to guarantee that an estimator can distinguish between the distributions with probability $\geq 1 -\delta$ using $m$ samples, we require 
    \begin{equation}
        \frac{1}{2}\|\Phi_p(G_1)^m - \Phi_p(G_2)^m\|_1 \geq 1-2\delta.
    \end{equation}
    
    From the bound between the total variation and Hellinger distance along with tensorization \cite{Polyanskiy_Wu_2025}, we get 
    \begin{align*}
        1 - 2 \delta \leq \frac{1}{2} \oneNorm{\Phi_p(G_1)^m - \Phi_p(G_2)^m} &\leq \sqrt{1 - \left(1 - H^2(\Phi_p(G_1)^m, \Phi_p(G_2)^m)\right)^2}\\
        &= \sqrt{1 - \left(1 - h\right)^{2m}}
    \end{align*}
    Therefore, 
    \begin{align*}
        (1 - h)^{2m} &\leq 4\delta(1-\delta)\\
        \implies 2m \log\left(1-h\right) &\leq \log\left(4\delta(1-\delta)\right)
    \end{align*}
    Supposing $n$ is large enough so that $h \in [0,1/2)$, we have 
    \begin{align*}
        m &\geq \frac{\log\left( \frac{1}{4\delta (1-\delta)} \right)}{2\log\left(\frac{1}{1-h}\right)} \\
        &\geq \frac{\log\left( \frac{1}{4\delta (1-\delta)} \right)}{4h}\\
        % &= \Omega\left(N^2 \log \left(\frac{1}{\delta}\right)\right)\\
        &= \Omega\left(n^2 \log \left(\frac{1}{\delta}\right)\right).
    \end{align*}
\end{proof}

\end{document}